\begin{document}

\title[A high-fidelity quantum state transfer algorithm on the complete bipartite graph]{A high-fidelity quantum state transfer algorithm on the complete bipartite graph}
%%=============================================================%%
%% Prefix	-> \pfx{Dr}
%% GivenName	-> \fnm{Joergen W.}
%% Particle	-> \spfx{van der} -> surname prefix
%% FamilyName	-> \sur{Ploeg}
%% Suffix	-> \sfx{IV}
%% NatureName	-> \tanm{Poet Laureate} -> Title after name
%% Degrees	-> \dgr{MSc, PhD}
%% \author*[1,2]{\pfx{Dr} \fnm{Joergen W.} \spfx{van der} \sur{Ploeg} \sfx{IV} \tanm{Poet Laureate}
%%                 \dgr{MSc, PhD}}\email{iauthor@gmail.com}
%%=============================================================%%
\author*[1]{\fnm{Dan} \sur{Li}}\email{lidansusu007@163.com}
\author[1]{\fnm{Jia-Ni} \sur{Huang}}\email{huangjianiyee@163.com}
\author[1]{\fnm{Yu-Qian} \sur{Zhou}}\email{zhouyuqian@nuaa.edu.cn}
\author[2]{\fnm{Yu-Guang} \sur{Yang}}\email{yangyang7357@bjut.edu.cn}
\affil[1]{\orgdiv{College of Computer Science and Technology}, \orgname{Nanjing University of Aeronautics and Astronautics}, \orgaddress{\city{Nanjing}, \postcode{211106}, \country{China}}}

\affil[2]{\orgdiv{College of Computer Science and Technology}, \orgname{Beijing University of Technology}, \orgaddress{\state{Beijing}, \postcode{100124}, \country{China}}}

%%==================================%%
%% sample for unstructured abstract %%
%%==================================%%

\abstract{
	High-fidelity quantum state transfer is critical for quantum communication and scalable quantum computation. 
	Current quantum state transfer algorithms on the complete bipartite graph, which are based on discrete-time quantum walk search algorithms, suffer from low fidelity in some cases.  
	To solve this problem, in this paper we propose a two-stage quantum state transfer algorithm on the complete bipartite graph. 
	The algorithm is achieved by the generalized Grover walk with one marked vertex. 
	The generalized Grover walk's coin operators and the query oracles are both parametric unitary matrices, 
	which are designed flexibly based on the positions of the sender and receiver and the size of the complete bipartite graph. 
	We prove that the fidelity of the algorithm is greater than $1-2\epsilon_{1}-\epsilon_{2}-2\sqrt{2}\sqrt{\epsilon_{1}\epsilon_{2}}$ or $1-(2+2\sqrt{2})\epsilon_{1}-\epsilon_{2}-(2+2\sqrt{2})\sqrt{\epsilon_{1}\epsilon_{2}}$ for any adjustable parameters $\epsilon_{1}$ and $\epsilon_{2}$ when the sender and receiver are in the same partition or different partitions of the complete bipartite graph. 
	The algorithm provides a novel approach to achieve high-fidelity quantum state transfer on the complete bipartite graph in any case, which will offer potential applications for quantum information processing.
}

\keywords{Quantum walk, Quantum state transfer, Complete bipartite graph, Generalized Grover walk}

%%\pacs[JEL Classification]{D8, H51}

%%\pacs[MSC Classification]{35A01, 65L10, 65L12, 65L20, 65L70}

\maketitle

\section{Introduction}
Quantum walk \cite{kadian2021quantum,venegas2012quantum}, the quantum counterpart of classical random walk, was first proposed by Aharonov \cite{aharonov1993quantum} in 1993. 
It is a universal model of quantum computation\cite{childs2009universal,lovett2010universal} and has become a useful tool for designing quantum algorithms, such as quantum search algorithms \cite{reitzner2009quantum,rhodes2019quantum}, quantum state transfer algorithms \cite{yalccinkaya2015qubit,zhan2014perfect}, quantum hash functions\cite{li2022controlled,li2013discrete}, and so on\cite{ambainis2007quantum,magniez2007quantum,reitzner2017finding}.
There are two kinds of quantum walks, discrete-time quantum walks\cite{yalccinkaya2015qubit,zhan2014perfect} and continuous-time quantum walks\cite{wang2019controlled,childs2004spatial,philipp2016continuous}. 

Quantum state transfer is to transfer the initial state from the sender to the receiver which is critical for quantum communication and scalable quantum computation \cite{divincenzo2000physical}. 
When the fidelity of the quantum state transfer algorithm is $1$, we call it perfect state transfer. 
It can be divided into two cases: the position of the sender and receiver are known or unknown.
When the position of the sender and the receiver vertices are known, we can globally design the dynamics to transfer the walker from one to the other. 
It was investigated on different graphs such as a line \cite{yalccinkaya2015qubit,zhan2014perfect}, a circle \cite{yalccinkaya2015qubit}, a 2D lattice \cite{zhan2014perfect}, a regular graph \cite{shang2019quantum}, a complete graph\cite{shang2019quantum} or more general networks \cite{chen2019quantum}.
When the position of the sender and the receiver are unknown, the Grover walk with two marked vertices, the sender and receiver, is used to achieve state transfer. 
It was analyzed on various types of graphs such as a star graph\cite{vstefavnak2016perfect}, a complete bipartite graph \cite{vstefavnak2017perfect}, a complete M-partite graph \cite{skoupy2021quantum}, or a circulant graph \cite{zhan2019infinite}. In this paper, we consider the latter.

Current quantum state transfer algorithms on the complete bipartite graph, which are based on discrete-time quantum walk search algorithms, have low fidelity in some cases. 
Ref. \cite{vstefavnak2017perfect} has proved that perfect state transfer can not be achieved when the sender and receiver are in opposite partitions with different sizes. The fidelity is low when the number of vertices in the two partitions differs greatly. Ref. \cite{santos2022quantum} uses lackadaisical quantum walks to achieve state transfer. But it achieves high fidelity only when the number of vertices in two partitions of the complete bipartite graph exceeds a certain number. 

To avoid the problem of low fidelity, in this paper we propose a two-stage quantum state transfer algorithm on the complete bipartite graph that achieves high-fidelity quantum state transfer in any case.
It is inspired by Ref. \cite{xu2022robust}. 
As shown in Fig. \ref{model}, the initial state is transferred to the uniform superposition state of the vertices on the other side of the sender with the fidelity of at least $1-\epsilon_{1}$ in the first stage. 
In the second stage, the uniform superposition state of the vertices on the other side of the sender is transferred to the target state with the fidelity of at least $1-\epsilon_{2}$, when the sender and receiver are in the same partition or different partitions. 

The algorithm is achieved by the generalized Grover walks with one marked vertex. 
In the first stage, the marked vertex is the sender. But in the second stage, the receiver is the marked vertex.
The coin operators of the generalized Grover walk and the query oracles are both parametric unitary matrices changed with time which are designed according to the position of the sender and receiver and the size of the complete bipartite graph. 

Through analysis, it is found that the fidelity of the quantum state transfer algorithm is greater than $ 1-2\epsilon_{1}-\epsilon_{2}-2\sqrt{2}\sqrt{\epsilon_{1}\epsilon_{2}}$ or $1-(2+2\sqrt{2})\epsilon_{1}-\epsilon_{2}-(2+2\sqrt{2})\sqrt{\epsilon_{1}\epsilon_{2}}$ when the sender and receiver are in the same partition or different partitions. $\epsilon_{1}, \epsilon_{2}$ are tunable parameters chosen from (0,1]. 
When $\epsilon_{1}$ and $\epsilon_{2}$ are small, the value of fidelity of the quantum state transfer algorithm will be close to 1. 
The advantage of the algorithm is it works in any case since high-fidelity quantum state transfer can be reached by adjusting the parameters of the coin operators and the query oracles.
\begin{figure}[H]
	\centering
	\includegraphics[width=1\textwidth]{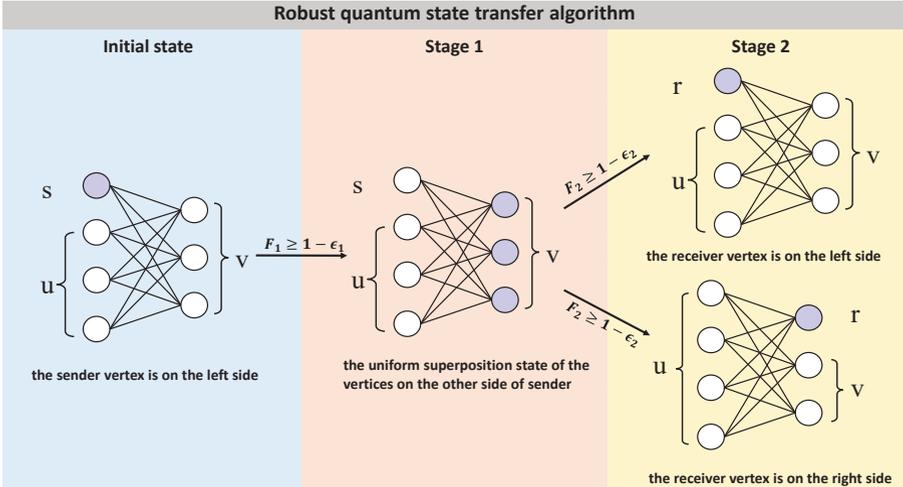}
	\caption{The process of the quantum state transfer algorithm. }
	\label{model}
\end{figure}
The rest of this paper is organized as follows. In section \ref{sec2}, some preliminaries are introduced. The quantum state transfer algorithm is presented in section \ref{sec3} and section \ref{sec4}. A conclusion is presented in Section \ref{sec5}.

\section{Preliminaries}\label{sec2}
\textbf{Complete bipartite graph. }
Let $G=(V, E)$ be a graph where $V$ is the vertex set and $E$ is the edge set. For $u \in V$, $deg(u)=\{ v \vert (u,v)\in E \} $ denotes the set of neighbors of $u$. The degree of $u$ is denoted as $d_{u}=\vert deg(u)\vert $.
A bipartite graph can be denoted as $G=\{V_{1} \cup V_{2},E=\{(u,v)\vert u\in V_{1},v\in V_{2}\}\}$ with $V_{1} \cap V_{2}= \varnothing$.
$V_{1}$ and $V_{2}$ denote the vertices on the left side of the bipartite graph and the right side of it respectively.
A complete bipartite graph is a bipartite graph where every vertex on the left side is connected to every vertex on the right side. A complete bipartite graph is shown in Fig. \ref{graphtwo}, which contains 4 vertices on the left side and 3 vertices on the right side.
\begin{figure}[H]
	\centering
	\includegraphics[width=0.4\textwidth]{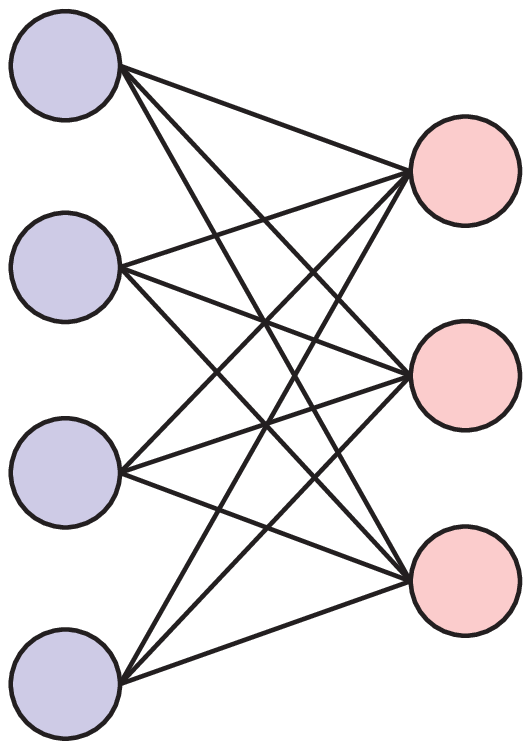}
	\caption{A complete bipartite graph with 4 vertices on the left side and 3 vertices on the right side.}
	\label{graphtwo}
\end{figure}
\textbf{Generalized Grover walk. }
A coined walk is called the Grover walk if the coin operator is the Grover matrix.
The Grover walk is generalized by considering coin operators as parametric unitary matrices, which include the Grover matrix as a special case for some particular values of the parameters.

The Hilbert space of generalized Grover walk on a graph $G = (V, E)$ can be defined as
\begin{equation}
	H^{N^2 }=span\{\vert uv\rangle ,(u,v)\in E\}, 
\end{equation}
where $u$ is the position of the walker and $v$ is the coin that represents a neighbor of $u$. $N$ is the number of vertices in the complete bipartite graph. 

The evolution operator of the generalized Grover walk with marked vertex used in this paper is denoted as 
\begin{equation}
	U(\alpha,\beta)=SC(\alpha)Q(\beta), 
\end{equation}
where the flip-flop shift operator $S$ is
\begin{equation}
	S=\sum_{u,v}\vert u,v\rangle \langle v,u\vert. 
\end{equation}
The coin operator $C(\alpha)$ is 
\begin{equation}
	C(\alpha)=I\otimes\sum\limits_{u}[(1-e^{-i\alpha})\vert \Psi_{u}\rangle \langle\Psi_{u}\vert -I], 
\end{equation}
where
\begin{equation}
	\vert \Psi_{u}\rangle =\frac{1}{\sqrt{d_{u}}}\sum\limits_{v\in deg(u)}\vert v\rangle. 
\end{equation}
The query oracle $Q(\beta)$ is 
\begin{equation}
	Q(\beta)\vert uv\rangle=\left\{
	\begin{aligned}
		e^{i\beta}\vert uv\rangle & , & when \  u \  is  \  marked,\\
		\vert uv\rangle & , &  when  \  u  \  is  \  not \ marked.
	\end{aligned}
	\right. 
\end{equation}
Let the initial state be $\vert \psi_{0}\rangle$. The state of the walker after $t$ steps is given by
\begin{equation}
	\vert \psi_{t}\rangle=U(\alpha_{t},\beta_{t})U(\alpha_{t-1},\beta_{t-1})...U(\alpha_{2},\beta_{2})U(\alpha_{1},\beta_{1})\vert \psi_{0}\rangle.
\end{equation}
\textbf{Quantum state transfer. }
The initial state of the quantum state transfer algorithm is
\begin{equation}
	\vert \psi_{0}\rangle=\frac{1}{\sqrt{d_{s}}}\sum\limits_{v\in deg(s)}\vert sv\rangle ,
\end{equation}
where $s$ is the position of the sender. 

The target state of quantum state transfer is
\begin{equation}
	\vert target\rangle=\frac{1}{\sqrt{d_{r}}}\sum\limits_{v\in deg(r)}\vert rv\rangle, 
\end{equation}
where $r$ is the position of the receiver.
The fidelity of the final state and the target state is given by
\begin{equation}
	F(t)=\vert \langle target\vert \psi_{t}\rangle\vert ^{2}. 
\end{equation}
We call it perfect state transfer when the value of fidelity is 1.

\textbf{Quasi-Chebyshev polynomial. } The Chebyshev polynomials of the first kind $T_{n}(x)$ are defined by initial values $T_{0}(x)=1$, $T_{1}(x)=x$, and for an integer $n \ge 2$,
\begin{equation}
	T_{n}(x)=2xT_{n-1}(x)-T_{n-2}(x).
\end{equation}
A result of one Quasi-Chebyshev polynomial implied in \cite{yoder2014fixed} is stated in the following lemma.
\newtheorem{lem}{Lemma}
\begin{lem}\label{lem1}
	 Let $x=cos(\theta)$ for $\theta \in [0,2\pi]$. Let $h\ge3 $ be an odd integer.
	 One Quasi-Chebyshev polynomial $a_k(x)$ is defined by initial values $a_{0}(x)=1,a_1(x)=x$, and for $k=2,...,h,$
	 \begin{equation}
	 	a_{k}(x)=x(1+e^{-i(\eta_{k}-\eta_{k-1})})a_{k-1}(x)-e^{-i(\eta_{k}-\eta_{k-1})}a_{k-2}(x) .
	 \end{equation}
	 When $\eta_{k+1}-\eta_{k}=(-1)^{k}\pi-2arccot(tan(\frac{k\pi}{h})\sqrt{1-\gamma^{2}}) $ for $k=1,...,h-1,$ where $
	 \gamma=\frac{1}{cos(\frac{1}{h}arccos(\frac{1}{\sqrt{\epsilon}}))} $ with $ \epsilon \in (0,1],$ we have $
	 a_h(x)=\frac{T_h(\frac{x}{\gamma})}{T_h(\frac{1}{\gamma})} $ with $ T_h(\frac{1}{\gamma})=\frac{1}{\sqrt{\epsilon}}.
	 $
\end{lem}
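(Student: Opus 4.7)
The plan is to split the lemma into two claims and handle them separately. First, the assertion $T_h(1/\gamma) = 1/\sqrt{\epsilon}$ is immediate from the defining formula $\gamma = 1/\cos(\frac{1}{h}\arccos(1/\sqrt{\epsilon}))$: setting $\psi = \frac{1}{h}\arccos(1/\sqrt{\epsilon})$, the Chebyshev identity $T_h(\cos\psi) = \cos(h\psi)$ gives $T_h(1/\gamma) = T_h(\cos\psi) = \cos(\arccos(1/\sqrt{\epsilon})) = 1/\sqrt{\epsilon}$. This half is just unpacking definitions.

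The substantive claim $a_h(x) = T_h(x/\gamma)/T_h(1/\gamma)$ I would attack by reverse-engineering the phases from the target polynomial, in the spirit of \cite{yoder2014fixed}. Concretely, with $u = x/\gamma$, I would (i) factor the target $P(u) := T_h(u)/T_h(1/\gamma)$ over its $h$ real Chebyshev zeros $\cos((2j-1)\pi/(2h))$; (ii) group the factors in an order that forces a three-term recurrence with complex coefficients of the form
\[
p_k(u) = u(1+e^{-i\delta_k})\,p_{k-1}(u) - e^{-i\delta_k}\,p_{k-2}(u),
\]
matching the recurrence defining $a_k$; and (iii) verify that the $\delta_k$ read off from this construction agree with the prescribed $\eta_k-\eta_{k-1}$. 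The key trigonometric step, using the identity $e^{\pm 2i\,\mathrm{arccot}(y)} = (y\pm i)/(y\mp i)$ with $y = \tan((k-1)\pi/h)\sqrt{1-\gamma^{2}}$, rewrites the prescribed phases as rational functions built from Chebyshev node coordinates, so that they align term-by-term with the recurrence coefficients obtained in step (ii).

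An alternative is direct strong induction on $k$, but the intermediate $a_k$ for $k<h$ are neither real-valued nor Chebyshev polynomials, so one would have to carry along an ansatz of the form $a_k(x) = e^{i\phi_k} R_k(x/\gamma)$ with $R_k$ a ratio of products keyed to the Chebyshev zeros, and verify that the stated choice of $\eta_k$ is precisely what keeps this structure invariant under the recurrence. Either route reduces the lemma to the same trigonometric identity at the end.

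The main obstacle is the phase bookkeeping. The $(-1)^k\pi$ alternation in $\eta_{k+1}-\eta_k$ together with the complex coin factor $e^{-i\delta_k}$ conspire so that the imaginary contributions accumulated during the first $h-1$ steps cancel only after the full product is assembled at step $k=h$, at which point $a_h$ becomes real-valued and collapses to the rescaled Chebyshev polynomial. Exhibiting this cancellation cleanly is where the factored-polynomial / reverse-engineering viewpoint earns its keep; verifying it by direct induction is possible but algebraically opaque, since no intermediate $a_k$ admits a clean closed form.
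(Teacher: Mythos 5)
First, note that the paper does not prove this lemma at all: it is imported from \cite{yoder2014fixed} (``implied in''), so there is no in-paper argument to measure yours against. Your first half is correct and complete: with $\psi=\frac{1}{h}\arccos(\frac{1}{\sqrt{\epsilon}})$ (a purely imaginary angle since $\frac{1}{\sqrt{\epsilon}}\ge 1$), the polynomial identity $T_h(\cos\psi)=\cos(h\psi)$ gives $T_h(\frac{1}{\gamma})=\frac{1}{\sqrt{\epsilon}}$, and it also shows $\gamma\in(0,1]$, so that $\sqrt{1-\gamma^{2}}$ in the phase prescription is real.

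The main claim, however, is only a plan, and the mechanism in your step (ii) fails as literally described. In a three-term recurrence $p_{k}(u)=u(1+e^{-i\delta_{k}})p_{k-1}(u)-e^{-i\delta_{k}}p_{k-2}(u)$ the intermediate polynomials are not partial products of the linear factors of $p_{h}$: already in the undeformed case $\delta_{k}\equiv 0$ the recurrence generates the Chebyshev polynomials themselves, and $T_{1}(u)=u$ does not divide $T_{2}(u)=2u^{2}-1$, so no ordering or grouping of the zeros of $T_{h}$ can realize the $p_{k}$ as accumulated factors. The root-based argument that actually works (and is the one behind \cite{yoder2014fixed}) is different: encode the recurrence as a product of $2\times 2$ matrices, show that at each zero $x=\gamma\cos(\frac{(2j-1)\pi}{2h})$ the prescribed phase differences make that product diagonal, conclude that the degree-$h$ polynomial $a_{h}$ vanishes at all $h$ of these points and is therefore proportional to $T_{h}(\frac{x}{\gamma})$, and fix the constant by the evaluation $a_{h}(1)=1$ (immediate by induction from the recurrence, and matching $T_{h}(\frac{1}{\gamma})/T_{h}(\frac{1}{\gamma})=1$). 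Neither this diagonalization step nor the ``key trigonometric identity'' you defer to is actually carried out in your proposal, and that identity is the entire content of the lemma; as written you have a correct proof of the normalization statement plus an outline whose central step needs to be replaced by the vanishing-at-Chebyshev-nodes argument before it becomes a proof.
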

\section{Sender and receiver in the same partition}\label{sec3}
The quantum state transfer algorithm will be different when the sender and receiver are in the same partition or different partitions. In this section, we propose a quantum state transfer algorithm when the sender and receiver are in the same partition. As shown in Fig. \ref{first_case}, the sender and the receiver are on the left side of the complete bipartite graph. 
The left side of the complete bipartite graph has $m$ vertices and the right side of it has $n$ vertices.
\begin{figure}[H]
	\centering
	\includegraphics[width=0.4\textwidth]{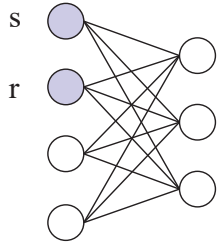}
	\caption{The sender and the receiver are on the left side of the complete bipartite graph.}
	\label{first_case}
\end{figure}
Our algorithm is as follows.
\begin{algorithm}[H]
	\caption{Quantum state transfer algorithm (sender and receiver in the same partition)}\label{algo1}
	\renewcommand{\algorithmicrequire}{\textbf{Input:}}
	\begin{algorithmic}
		\Require the initial state$\vert\psi_{0}\rangle$, parameters $\epsilon_{1}$ and $\epsilon_{2}$. 
	\end{algorithmic}
	\textbf{First stage:}
	\begin{algorithmic}
		\State \textbf{Initialization: }
		\State {\indent}Let $h_{1}$ be an odd integer and ensure $h_{1} \ge ln(\frac{2}{\sqrt{\epsilon_{1}}})\sqrt{m}$.
		\State {\indent}Let $\beta_{k}=-\alpha_{h_{1}+2-k}=-2arccot(tan(\frac{(k-1)\pi}{h_{1}})\sqrt{1-\gamma_{1}^2})$ for $k=3,5,7,...,h_{1}$, where $\gamma_{1}=\frac{1}{cos(\frac{1}{h_{1}}arccos(\frac{1}{\sqrt{\epsilon_{1}}}))}$. The other $\alpha_{i}$ and $\beta_{i},$ can be any value. 
		\State \textbf{Perform the evolution operator: }
		\State{\indent}$\vert \psi_{h_{1}}\rangle=U(\alpha_{h_{1}},\beta_{h_{1}})U(\alpha_{h_{1}-1},\beta_{h_{1}-1})...U(\alpha_{2},\beta_{2})U(\alpha_{1},\beta_{1})\vert\psi_{0}\rangle$
	\end{algorithmic}
	\textbf{Second stage:}
	\begin{algorithmic}
		\State \textbf{Initialization: }
		\State {\indent}Let $h_{2}$ be an odd integer and ensure $h_{2} \ge ln(\frac{2}{\sqrt{\epsilon_{2}}})\sqrt{m}$.
		\State {\indent}Let $\alpha^{'}_{k}=-\beta^{'}_{h_{2}+2-k}=2arccot(tan(\frac{(k-1)\pi}{h_{2}})\sqrt{1-\gamma_{2}^2})$
		for
		$k=3,5,7,...,h_{2}$, where $\gamma_{2}=\frac{1}{cos(\frac{1}{h_{2}}arccos(\frac{1}{\sqrt{\epsilon_{2}}}))}$. The other $\alpha_{i}^{'}$ and $\beta_{i}^{'}$ can be any value.
		\State \textbf{Perform the evolution operator: }
		\State {\indent }$\vert \psi_{h_{2}}\rangle=U(\alpha^{'}_{h_{2}},\beta^{'}_{h_{2}})U(\alpha^{'}_{h_{2}-1},\beta^{'}_{h_{2}-1})...U(\alpha^{'}_{2},\beta^{'}_{2})U(\alpha^{'}_{1},\beta^{'}_{1})\vert\psi_{h_{1}}\rangle$
	\end{algorithmic}
\end{algorithm}
Our algorithm is divided into two stages. The purpose of the first stage is to transfer the initial state to the uniform superposition state of the vertices on the other side of the sender. In the first stage, only the sender is the marked vertex. The purpose of the second stage is to transfer the uniform superposition state of the vertices on the other side of the sender to the target state. In the second stage, only the receiver is the marked vertex.

The analysis of the first stage and the second stage are shown in \ref{3.1} and \ref{3.2} respectively.
The analysis of the fidelity of the quantum state transfer algorithm is shown in \ref{3.3}. 
\subsection{The first stage of the quantum state transfer algorithm}\label{3.1}
In the first stage, only the sender is marked. Thus, the vertices can be divided into three parts shown in Fig. \ref{first_stage}: the sender denoted by $s$ on the left side, the other vertices denoted by $u$ on the left side, and $v$ on the right side. 
Therefore, the analysis can be simplified in an invariant subspace with the orthogonal
basis $\{\vert e_{1}\rangle, \vert e_{2}\rangle, \vert e_{3}\rangle, \vert e_{4}\rangle \}$ given below. The orthogonal basis is only used in \ref{3.1}. 
\begin{equation}
	\begin{aligned}
		&\vert e_{1}\rangle=\frac{1}{\sqrt{n}}\sum\limits_{v}\vert sv\rangle
 		,\\
		&\vert e_{2}\rangle=\frac{1}{\sqrt{n}}\sum\limits_{v}\vert vs\rangle
		,\\
		&\vert e_{3}\rangle=\frac{1}{\sqrt{n(m-1)}}\sum\limits_{v,u}\vert vu\rangle,\\
		&\vert e_{4}\rangle=\frac{1}{\sqrt{n(m-1)}}\sum\limits_{u,v}\vert uv\rangle. 
	\end{aligned}.
\end{equation}
\begin{figure}[H]
	\centering
	\includegraphics[width=0.4\textwidth]{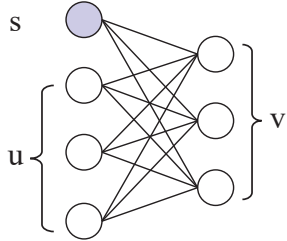}
	\caption{Only the sender is marked in the first stage.}
	\label{first_stage}
\end{figure}
So the initial state can be denoted as
$\vert \psi_0\rangle=\frac{1}{\sqrt{n}}\sum\limits_{v}\vert sv\rangle=\vert e_{1}\rangle=(1,0,0,0)^T$.
The target state of the first stage can be denoted as
$
\vert \Psi\rangle=\frac{1}{\sqrt{mn}}(\sum\limits_{v,u}\vert vu\rangle+\sum\limits_{v}\vert vs\rangle)=\frac{1}{\sqrt{m}}\vert e_{2}\rangle+\frac{\sqrt{m-1}}{\sqrt{m}}\vert e_{3}\rangle=(0,\frac{1}{\sqrt{m}},\frac{\sqrt{m-1}}{\sqrt{m}},0)^T
$. 

The flip-flop shift operator $S$, the query oracle $Q(\beta)$, and the coin operator $C(\alpha)$ can be rewritten as 
\begin{equation}
	S=\begin{pmatrix}
		0&1&0&0\\
		1&0&0&0\\
		0&0&0&1\\
		0&0&1&0\\
	\end{pmatrix}
	,
	\begin{aligned}
		Q(\beta)=\begin{pmatrix}
			e^{i\beta}&0&0&0\\
			0&1&0&0\\
			0&0&1&0\\
			0&0&0&1\\
		\end{pmatrix}
	\end{aligned}
	,
\end{equation}
and
\begin{equation}
	\begin{aligned}
		C(\alpha)=\begin{pmatrix}
			-e^{-i\alpha}&0&0&0\\
			0&\frac{(1-e^{-i\alpha})(1-cos(\omega))}{2}-1&\frac{(1-e^{-i\alpha})sin(\omega)}{2}&0\\
			0&\frac{(1-e^{-i\alpha})sin(\omega)}{2}&\frac{(1-e^{-i\alpha})(1+cos(\omega))}{2}-1&0\\
			0&0&0&-e^{-i\alpha}\\
		\end{pmatrix}
	\end{aligned}
	,
\end{equation}
where $cos(\omega)=1-\frac{2}{m},sin(\omega)=\frac{2}{m}\sqrt{m-1}$.

In the first stage, we know  
\begin{equation}
	\begin{aligned}
		\vert \psi_{h_{1}}\rangle &=SC(\alpha_{h_{1}})Q(\beta_{h_{1}})SC(\alpha_{h_{1}-1})Q(\beta_{h_{1}-1})...SC(\alpha_2)Q(\beta_2)SC(\alpha_1)Q(\beta_1)\vert \psi_0\rangle.\\
	\end{aligned}
\end{equation}
The coin operator $C(\alpha)$ can be denoted as 
\begin{equation}
	C(\alpha)=e^{-\frac{i\alpha}{2}}A(\frac{\pi}{2})R(\alpha)A(-\frac{\pi}{2})
	\label{C}, 
\end{equation}
where 
\begin{equation}
	R(\theta)=-\begin{pmatrix}
		e^{-\frac{i\theta}{2}}&0&0&0\\
		0&e^{\frac{i\theta}{2}}&0&0\\
		0&0&e^{-\frac{i\theta}{2}}&0\\
		0&0&0&e^{-\frac{i\theta}{2}}\\
	\end{pmatrix}, 
\end{equation}
and 
\begin{equation}
	A(\theta)=\begin{pmatrix}
		1&0&0&0\\
		0&cos(\frac{\omega}{2})&-ie^{i\theta}sin(\frac{\omega}{2})&0\\
		0&-ie^{-i\theta}sin(\frac{\omega}{2})&cos(\frac{\omega}{2})&0\\
		0&0&0&1\\
	\end{pmatrix}.
\end{equation}
The query oracle $Q(\beta)$ can be denoted as 
\begin{equation}
	Q(\beta)=-e^{\frac{i\beta}{2}}SR({\beta})S.
	\label{Q}
\end{equation}
And we find the equation 
\begin{equation}
	SB_1SB_2S=B_2SB_1
	\label{S},
\end{equation}
where $B_1=\prod_{i=0}^{n_1}D_i $ and $B_2=\prod_{i=0}^{n_2}D_i$ for $ D_i\in{A(\theta_i),R(\theta_i)}$. 

By using Eq. (\ref{C}), Eq. (\ref{Q}), and Eq. (\ref{S}), we obtain
\begin{equation}\label{eq1}
	\begin{aligned}
		\vert \psi_{h_{1}}\rangle
		&\sim
		R(\beta_{h_{1}})A(\frac{\pi}{2})R(\alpha_{h_{1}-1})A(-\frac{\pi}{2})...
		R(\beta_3)A(\frac{\pi}{2})R(\alpha_{2})A(-\frac{\pi}{2})R(\beta_1)S\\&
		A(\frac{\pi}{2})R(\alpha_{h_{1}})A(-\frac{\pi}{2})  R(\beta_{h_{1}-1})A(\frac{\pi}{2})R(\alpha_{h_{1}-2})A(-\frac{\pi}{2})...
		R(\beta_{2})A(\frac{\pi}{2})R(\alpha_{1})A(-\frac{\pi}{2})
		\vert \psi_0\rangle.
	\end{aligned}
\end{equation}

$R(\theta)$  only adds a coefficient to the $\vert \psi_{0}\rangle$, and $A(\theta)$ makes effect only on the second and third dimensions of the $\vert \psi_{0}\rangle$, so Eq. (\ref{eq1}) can be simplified to 
\begin{equation}
	\vert \psi_{h_{1}}\rangle
	\sim
	R(\beta_{h_{1}})A(\frac{\pi}{2})R(\alpha_{h_{1}-1})A(-\frac{\pi}{2})...
	R(\beta_3)A(\frac{\pi}{2})R(\alpha_{2})A(-\frac{\pi}{2})
	S
	\vert \psi_0\rangle.
\end{equation}
Then using $
A(\alpha+\beta)=R(\beta)A(\alpha)R(-\beta)
$ and $
R(\theta)R(-\theta)=I
$, we obtain
\begin{equation}
	\begin{aligned}
		\vert \psi_{h_{1}}\rangle
		\sim& A(\frac{\pi}{2}+\beta_{h_{1}})A(-\frac{\pi}{2}+\beta_{h_{1}}+\alpha_{h_{1}-1})...A(\frac{\pi}{2}+\beta_{h_{1}}+\alpha_{h_{1}-1}+...+\beta_{3})\\&A(-\frac{\pi}{2}+\beta_{h_{1}}+\alpha_{h_{1}-1}+...+\beta_{3}+\alpha_{2})S\vert \psi_0\rangle.
	\end{aligned}
\end{equation}
The purpose of the first stage is to transfer the state $\vert \psi_{0} \rangle$ to the state $\vert \Psi \rangle$. 
The state $\vert \Psi \rangle$ can be denoted as
$
\vert \Psi \rangle=A(\frac{\pi}{2})S\vert e_4\rangle
$.
So the fidelity of the first stage is 
\begin{equation} 
	\begin{aligned}\label{F1}
		F_{1}
		=&\vert \langle e_{4} \vert SA(-\frac{\pi}{2})A(\frac{\pi}{2}+\beta_{h_{1}})A(-\frac{\pi}{2}+\beta_{h_{1}}+\alpha_{h_{1}-1})...A(\frac{\pi}{2}+\beta_{h_{1}}+\alpha_{h_{1}-1}+...+\beta_{3})\\&A(-\frac{\pi}{2}+\beta_{h_{1}}+\alpha_{h_{1}-1}+...+\beta_{3}+\alpha_{2})S\vert \psi_0\rangle\vert^{2}.
	\end{aligned}
\end{equation} 
There exists a set of parameters $\alpha_{i}$, $\beta_{i}$, then the value of fidelity $F_{1}$ will greater than or equal to $1-\epsilon_{1}$. It can be shown in theorem \ref{thm1}.
\newtheorem{thm}{Theorem}
\begin{thm}\label{thm1}
	Let $
	\beta_{k}=-\alpha_{h_{1}+2-k}=-2arccot(tan(\frac{(k-1)\pi}{h_{1}})\sqrt{1-\gamma_{1}^2})
	$ for
	$k=3,5,7...,h_{1},$ where 
	$\gamma_{1}=\frac{1}{cos(\frac{1}{h_{1}}arccos(\frac{1}{\sqrt{\epsilon_{1}}}))}$, and ensure  $
	h_{1} \ge ln(\frac{2}{\sqrt{\epsilon_{1}}})\sqrt{m}
	$, then the value of fidelity $F_{1}$ will be greater than or equal to $1-\epsilon_{1}$. 
\end{thm}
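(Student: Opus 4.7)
The strategy is a fixed-point amplitude amplification argument: reduce $F_{1}$ to a two-dimensional rotation problem and then bound the failure amplitude via Lemma~\ref{lem1}.

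First I would exploit that every operator $A(\theta)$ acts trivially on $\vert e_{1}\rangle$ and $\vert e_{4}\rangle$ and rotates only inside $W:=\mathrm{span}\{\vert e_{2}\rangle,\vert e_{3}\rangle\}$. Since $S\vert\psi_{0}\rangle=\vert e_{2}\rangle\in W$ and since $(SA(-\pi/2))^{\dagger}\vert e_{4}\rangle=A(\pi/2)\vert e_{3}\rangle=\sin(\omega/2)\vert e_{2}\rangle+\cos(\omega/2)\vert e_{3}\rangle=\vert\Psi\rangle$ also lies in $W$, the fidelity in Eq.~(\ref{F1}) collapses to $F_{1}=\vert\langle\Psi\vert X\vert e_{2}\rangle\vert^{2}$, where $X$ is the product of the $2\times 2$ blocks $\tilde A(\theta^{(j)})$ of the telescoped operators. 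Because the initial overlap $\langle\Psi\vert e_{2}\rangle=\sin(\omega/2)=1/\sqrt{m}$ is small, this is precisely the setting of fixed-point amplitude amplification: each $\tilde A(\theta)$ is a rotation by angle $\omega/2$ carrying a tunable phase $\theta$.

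Next I would use the symmetric assignment $\beta_{k}=-\alpha_{h_{1}+2-k}$ to simplify the cumulative phase sums defining the $\theta^{(j)}$. Under this pairing, the ``free'' parameters $\alpha_{i}$ (odd $i$) and $\beta_{i}$ (even $i$) cancel out of consecutive differences $\theta^{(j+1)}-\theta^{(j)}$, and these differences reduce to an alternating $\pm\pi$ (from the $\pm\pi/2$ pattern) plus the prescribed arccot term. Using $\mathrm{arccot}(-y)=\pi-\mathrm{arccot}(y)$ and $\tan((h_{1}-k)\pi/h_{1})=-\tan(k\pi/h_{1})$, I would verify modulo $2\pi$ that this sequence agrees with the $\eta_{k+1}-\eta_{k}$ of Lemma~\ref{lem1}. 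A direct elimination of the $\vert e_{3}\rangle$ amplitude between two consecutive steps then yields a three-term recurrence for the relevant entry of $X$ that coincides with the Quasi-Chebyshev recurrence at variable $x=\cos(\omega/2)$.

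Invoking Lemma~\ref{lem1}, the failure amplitude (the component of $X\vert e_{2}\rangle$ orthogonal to $\vert\Psi\rangle$) has modulus $\vert T_{h_{1}}(\cos(\omega/2)/\gamma_{1})\vert/T_{h_{1}}(1/\gamma_{1})$. Since $\vert T_{h_{1}}(y)\vert\le 1$ on $[-1,1]$ and $T_{h_{1}}(1/\gamma_{1})=1/\sqrt{\epsilon_{1}}$ by construction, this is at most $\sqrt{\epsilon_{1}}$ provided $\cos(\omega/2)\le\gamma_{1}$, giving $F_{1}\ge 1-\epsilon_{1}$. The inequality $\cos(\omega/2)\le\gamma_{1}$ follows from $\gamma_{1}=1/\cosh(\mathrm{arccosh}(1/\sqrt{\epsilon_{1}})/h_{1})$, the bound $\mathrm{arccosh}(z)\le\ln(2z)$, and the fact that $y\le 1/\sqrt{m}$ implies $\cosh(y)\le 1/\sqrt{1-1/m}=1/\cos(\omega/2)$; combined, these make the hypothesis $h_{1}\ge\ln(2/\sqrt{\epsilon_{1}})\sqrt{m}$ just sufficient.

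The main obstacle I anticipate is the phase-matching step. The theorem prescribes only the odd-indexed $\beta_{k}$ and correspondingly the even-indexed $\alpha_{k}$, leaving the remaining parameters free, so one must verify carefully that these free parameters genuinely drop out of every telescoped sum and that the resulting $\theta^{(j+1)}-\theta^{(j)}$ reproduce the $\eta_{k+1}-\eta_{k}$ of Lemma~\ref{lem1} --- including the parity factor $(-1)^{k}\pi$ and the reversed indexing $k\leftrightarrow h_{1}+2-k$ --- modulo $2\pi$. Once that identification is in hand, the Chebyshev-polynomial estimate and the asymptotic inequality for $h_{1}$ are essentially routine.
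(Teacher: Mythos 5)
Your proposal matches the paper's own proof in all essentials: the same reduction to the $\{\vert e_{2}\rangle,\vert e_{3}\rangle\}$ block after telescoping the $R$'s into the $A$'s, the same identification of the resulting three-term recurrence with the Quasi-Chebyshev recurrence of Lemma~\ref{lem1} at $x=\cos(\omega/2)$, and the same final bound $\vert T_{h_{1}}(x/\gamma_{1})\vert\le 1$ guaranteed by $h_{1}\ge \ln(2/\sqrt{\epsilon_{1}})\sqrt{m}$ (your $\cosh$ inequality is exactly the paper's $\tanh(y)\le y$ step in disguise). The only differences are cosmetic (folding the final $SA(-\pi/2)$ into the bra so the failure amplitude appears as the component orthogonal to $\vert\Psi\rangle$ rather than as $b_{h_{1}}$), so the proposal is correct and essentially identical to the paper's argument.
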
 
\begin{proof}[Proof. ]
	Let $
	\beta_{k}=-\alpha_{h_{1}+2-k}
	$ for
	$k=3,5,7...,h_{1}.$ So 
	Eq. (\ref{F1}) can be rewritten as
	\begin{equation} 
		\begin{aligned}\label{F1'}
			F_{1}
			=&\vert \langle e_{4} \vert SA(\phi_{h_{1}})A(\phi_{h_{1}-1})A(\phi_{h_{1}-2})...A(\phi_{2})A(\phi_{1})S\vert \psi_0\rangle\vert^{2},
		\end{aligned}
	\end{equation} 
	where
	$
	\phi_{k+1}-\phi_{k}=-\pi-\beta_{k+1} $ for $ k=2,4,6,...,h_{1}-1
	$ 
	and 
	$
	\phi_{k+1}-\phi_{k}=\pi+\beta_{h_{1}-k+1} $ for $k=1,3,5,...,h_{1}-2
	$.
	
	The formula $
	SA(\phi_{h_{1}})A(\phi_{h_{1}-1})A(\phi_{h_{1}-2})...A(\phi_{2})A(\phi_{1})S\vert \psi_0\rangle
	$ in Eq.(\ref{F1'}) can be viewed as the operator $SA(\phi_{h_{1}})A(\phi_{h_{1}-1})A(\phi_{h_{1}-2})...A(\phi_{2})A(\phi_{1})S$ applied to $\vert \psi_{0} \rangle$. 
	So it can be divided into three steps as follows.
	\begin{equation}
		\vert \psi_{0} \rangle =
		\begin{pmatrix}
			1\\0\\0\\0
		\end{pmatrix}\xrightarrow[\textcircled{1}]{S}
		\begin{pmatrix}
			0\\1\\0\\0
		\end{pmatrix}\xrightarrow[\textcircled{2}]{A(\phi_{h_{1}})A(\phi_{h_{1}-1})A(\phi_{h_{1}-2})...A(\phi_{2})A(\phi_{1})}
		\begin{pmatrix}
			0\\b_{h_{1}}(x)\\c_{h_{1}}(x)\\0
		\end{pmatrix}
		\xrightarrow[\textcircled{3}]{S}
		\begin{pmatrix}
			b_{h_{1}}(x)\\0\\0\\c_{h_{1}}(x)
		\end{pmatrix}\nonumber
	\end{equation}
	
	In step $\textcircled{2}$, the operator $A(\phi_{h_{1}})A(\phi_{h_{1}-1})A(\phi_{h_{1}-2})...A(\phi_{2})A(\phi_{1})$ will be applied to the state $\vert \mu_{0}\rangle=(0,b_0,c_0,0)^T=(0,1,0,0)^T$. Let $\vert \mu_{k}\rangle=(0,b_{k},c_{k},0)^T=A(\phi_{k})\vert \mu_{k-1}\rangle$ for $k=1,2,...,h_{1}$. 
	
	Combined 
	$
	\vert \mu_k\rangle=
	A(\phi_k)\vert \mu_{k-1}\rangle
	$
	and
	$
	\vert \mu_{k-2}\rangle=
	A(\phi_{k-1})^{-1}\vert \mu_{k-1}\rangle
	$, we obtain
	$b_{k}=cos(\frac{\omega}{2})(1+e^{-i(\phi_{k-1}-\phi_{k})})b_{k-1}-e^{-i(\phi_{k-1}-\phi_{k})}b_{k-2}$.
	So the recurrence formula of $b_{k}(x)$ can be defined by 
	$b_0(x)=1,b_1(x)=x$ and for $k=2,3,4,...,h_{1}, $
	\begin{equation}
		b_{k}(x)=x(1+e^{-i(\phi_{k-1}-\phi_{k})})b_{k-1}(x)-e^{-i(\phi_{k-1}-\phi_{k})}b_{k-2}(x),
	\end{equation}
	with $x=cos(\frac{\omega}{2})$.
	
	Let $
	\beta_{k}=-2arccot(tan(\frac{(k-1)\pi}{h_{1}})\sqrt{1-\gamma_{1}^2})
	$ for
	$k=3,5,7...,h_{1},$ where 
	$\gamma_{1}=\frac{1}{cos(\frac{1}{h_{1}}arccos(\frac{1}{\sqrt{\epsilon_{1}}}))}$. 
	So we have $\phi_{k}-\phi_{k+1}=(-1)^{k}\pi-2arccot(tan(\frac{k\pi}{h_{1}})\sqrt{1-\gamma_{1}^2})$ for $k=1,2,...,h_{1}-1$.
	By using lemma \ref{lem1}, 
	we obtain
	\begin{equation}
		b_{h_{1}}(x)=\frac{T_{h_{1}}(\frac{x}{\gamma_{1}})}{T_{h_{1}}(\frac{1}{\gamma_{1}})}
		=\sqrt{\epsilon_{1}}T_{h_{1}}(cos(\frac{1}{h_{1}}arccos(\frac{1}{\sqrt{\epsilon_{1}}}))\sqrt{1-\frac{1}{m}}).
	\end{equation}
	So the fidelity of the first stage can be calculated as follow.
	\begin{equation}
		\begin{aligned}
			F_{1}=1-\vert b_{h_{1}}(x)\vert ^2
			=1-\epsilon_{1}
			T_{h_{1}}^2(cos(\frac{1}{h_{1}}arccos(\frac{1}{\sqrt{\epsilon_{1}}}))\sqrt{1-\frac{1}{m}})
		\end{aligned} 
	\end{equation}
	Let
	$
	h_{1} \ge ln(\frac{2}{\sqrt{\epsilon_{1}}})\sqrt{m}
	$.
	We know $x \ge tanh(x)$ for $x \ge 0$, so we have
	\begin{equation}
		\frac{1}{m} \ge tanh^2(\frac{ln(\frac{2}{\sqrt{\epsilon_{1}}})}{h_{1}}) \textgreater tanh^{2}(\frac{1}{h_{1}}ln(\frac{1}{\sqrt{\epsilon_{1}}}+\sqrt{(\frac{1}{\sqrt{\epsilon_{1}}})^2-1})).
	\end{equation}
	Then using $arccos(z)=\frac{1}{i}ln(z+\sqrt{z^2-1})$ and $tan(iz)=itanh(z)$, we obtain 
	\begin{equation}
		\begin{aligned}
			tanh^{2}(\frac{1}{h_{1}}ln(\frac{1}{\sqrt{\epsilon_{1}}}+\sqrt{(\frac{1}{\sqrt{\epsilon_{1}}})^2-1}))
			&=1-cos^{-2}(\frac{1}{h_{1}}arccos(\frac{1}{\sqrt{\epsilon_{1}}})). \\
		\end{aligned}
	\end{equation}
	So we have
	$
	\frac{1}{m} \ \textgreater 1-cos^{-2}(\frac{1}{h_{1}}arccos(\frac{1}{\sqrt{\epsilon_{1}}})).
	$
	That is
	\begin{equation}
		cos(\frac{1}{h_{1}}arccos(\frac{1}{\sqrt{\epsilon_{1}}}))\sqrt{1-\frac{1}{m}}\ \ \textless 1.
	\end{equation}
	Then we can obtain
	$
	F_{1}\ge 1-\epsilon_{1}
	$.
\end{proof}

Therefore, let $
\beta_{k}=-\alpha_{h_{1}+2-k}=-2arccot(tan(\frac{(k-1)\pi}{h_{1}})\sqrt{1-\gamma_{1}^2})
$ for
$k=3,5,7...,h_{1},$ where 
$\gamma_{1}=\frac{1}{cos(\frac{1}{h_{1}}arccos(\frac{1}{\sqrt{\epsilon_{1}}}))}$, and ensure  $
h_{1} \ge ln(\frac{2}{\sqrt{\epsilon_{1}}})\sqrt{m}
$, the initial state will be transferred to the uniform superposition state of the vertices on the other side of the sender with the fidelity of at least $1-\epsilon_{1}$.

\subsection{The second stage of the quantum state transfer algorithm}\label{3.2}
In the second stage, only the receiver is marked (shown in Fig. \ref{receiver}). Thus the analysis can be simplified in an invariant subspace with the orthogonal
basis $\{\vert e_{1}\rangle, \vert e_{2}\rangle, \vert e_{3}\rangle, \vert e_{4}\rangle \}$  given below. The orthogonal basis is only used in \ref{3.2}. 
\begin{equation}
	\begin{aligned}
		&\vert e_{1}\rangle=\frac{1}{\sqrt{n}}\sum\limits_{v}\vert rv\rangle
		,\\
		&\vert e_{2}\rangle=\frac{1}{\sqrt{n}}\sum\limits_{v}\vert vr\rangle
		,\\
		&\vert e_{3}\rangle=\frac{1}{\sqrt{n(m-1)}}\sum\limits_{v,u}\vert vu\rangle,\\
		&\vert e_{4}\rangle=\frac{1}{\sqrt{n(m-1)}}\sum\limits_{u,v}\vert uv\rangle.
	\end{aligned}
\end{equation}
\begin{figure}[H]
	\centering
	\includegraphics[width=0.4\textwidth]{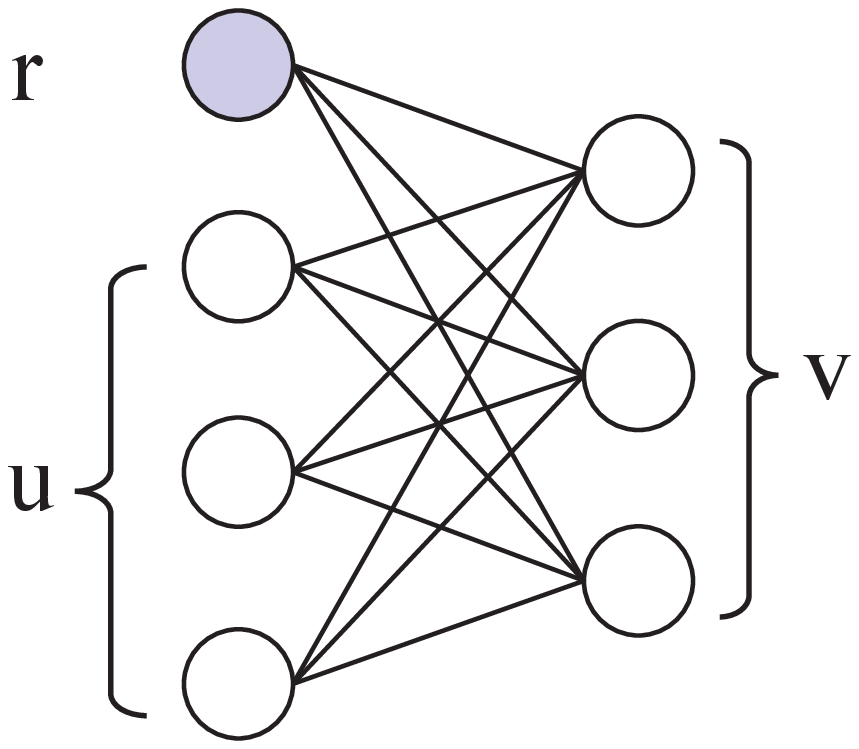}
	\caption{Only the receiver is marked in the second stage.}
	\label{receiver}
\end{figure}
The flip-flop shift operator $S_{1}$, the query oracle $Q_{1}(\beta)$, and the coin operator $C_{1}(\alpha)$ can be rewritten as
\begin{equation}
	\begin{aligned}
		S_1=\begin{pmatrix}
			0&1&0&0\\
			1&0&0&0\\
			0&0&0&1\\
			0&0&1&0\\
		\end{pmatrix}
	\end{aligned},
	\begin{aligned}
		Q_1(\beta)=\begin{pmatrix}
			e^{i\beta}&0&0&0\\
			0&1&0&0\\
			0&0&1&0\\
			0&0&0&1\\
		\end{pmatrix}, 
	\end{aligned}
\end{equation}
and
\begin{equation}
	\begin{aligned}
		C_1(\alpha)=\begin{pmatrix}
			-e^{-i\alpha}&0&0&0\\
			0&\frac{(1-e^{-i\alpha})(1-cos(\omega_1))}{2}-1&\frac{(1-e^{-i\alpha})sin(\omega_1)}{2}&0\\
			0&\frac{(1-e^{-i\alpha})sin(\omega_1)}{2}&\frac{(1-e^{-i\alpha})(1+cos(\omega_1))}{2}-1&0\\
			0&0&0&-e^{-i\alpha}\\
		\end{pmatrix},
	\end{aligned}
\end{equation}
where $cos(\omega_1)=1-\frac{2}{m}$ and $sin(\omega_1)=\frac{2}{m}\sqrt{m-1}$. 

In the second stage, we have
\begin{equation}
	\begin{aligned}
		\vert \psi_{h_{2}}\rangle=S_1C_1(\alpha^{'}_{h_{2}})Q_1(\beta^{'}_{h_{2}})S_1C_1(\alpha^{'}_{h_{2}-1})Q_1(\beta^{'}_{h_{2}-1})...S_1C_1(\alpha^{'}_{1})Q_1(\beta^{'}_{1})\vert \psi_{h_{1}}\rangle. \\
	\end{aligned}
\end{equation}
The coin operator $C_{1}(\alpha)$ can be denoted as 
\begin{equation}
	C_1(\alpha)=e^{-\frac{i\alpha}{2}}A_1(\frac{\pi}{2})R_1(\alpha)A_1(-\frac{\pi}{2})
	\label{C1}, 
\end{equation}
where 
\begin{equation}
	R_1(\theta)=-\begin{pmatrix}
		e^{-\frac{i\theta}{2}}&0&0&0\\
		0&e^{\frac{i\theta}{2}}&0&0\\
		0&0&e^{-\frac{i\theta}{2}}&0\\
		0&0&0&e^{-\frac{i\theta}{2}}\\
	\end{pmatrix}, 
\end{equation}
and 
\begin{equation}
	A_1(\theta)=\begin{pmatrix}
		1&0&0&0\\
		0&cos(\frac{\omega_1}{2})&-ie^{i\theta}sin(\frac{\omega_1}{2})&0\\
		0&-ie^{-i\theta}sin(\frac{\omega_1}{2})&cos(\frac{\omega_1}{2})&0\\
		0&0&0&1\\
	\end{pmatrix}.
\end{equation}
The query oracle $Q_{1}(\beta)$ can be denoted as 
\begin{equation}
	Q_1(\beta)=-e^{\frac{i\beta}{2}}S_1R_1({\beta})S_1. 
	\label{Q1}
\end{equation}
And we find the equation 
\begin{equation}
	S_1B_1S_1B_2S_1=B_2S_1B_1,
	\label{S1}
\end{equation}
where $B_1=\prod_{i=0}^{n_1}D_i,B_2=\prod_{i=0}^{n_2}D_i$ for $D_i\in{A_1(\theta_i),R_1(\theta_i)}$. 

By using Eq. (\ref{C1}), Eq. (\ref{Q1}) and Eq. (\ref{S1}), we obtain 
\begin{equation}\label{eq2}
	\begin{aligned}
		\vert \psi_{h_{2}}\rangle
		\sim&
		R_{1}(\beta^{'}_{h_{2}})A_{1}(\frac{\pi}{2})R_{1}(\alpha^{'}_{h_{2}-1})A_{1}(-\frac{\pi}{2})...
		R_{1}(\beta^{'}_3)A_{1}(\frac{\pi}{2})R_{1}(\alpha^{'}_{2})A_{1}(-\frac{\pi}{2})R_{1}(\beta^{'}_1)S_{1}\\&
		A_{1}(\frac{\pi}{2})R_{1}(\alpha^{'}_{h_{2}})A_{1}(-\frac{\pi}{2})  R_{1}(\beta^{'}_{h_{2}-1})A_{1}(\frac{\pi}{2})R_{1}(\alpha^{'}_{h_{2}-2})A_{1}(-\frac{\pi}{2})...\\
		&
		R_{1}(\beta^{'}_{4})A_{1}(\frac{\pi}{2})R_{1}(\alpha^{'}_{3})A_{1}(-\frac{\pi}{2})R_{1}(\beta^{'}_{2})A_{1}(\frac{\pi}{2})R_{1}(\alpha^{'}_{1})A_{1}(-\frac{\pi}{2})
		\vert \psi_{h_{1}}\rangle.
	\end{aligned}
\end{equation}

The state $\vert \psi_{h_{1}} \rangle$ can be rewritten as 
$
\vert \psi_{h_{1}}\rangle\approx\vert \Psi\rangle=A_1(\frac{\pi}{2})S_1\vert e_4\rangle
$. Then we eliminate invalid $A_{1}(\theta)$ and $R_{1}(\theta)$. So Eq. (\ref{eq2}) can be simplified to 
\begin{equation}
	\begin{aligned}
		\vert \psi_{h_{2}}\rangle
		\sim&
		S_{1}A_{1}(\frac{\pi}{2})R_{1}(\alpha^{'}_{h_{2}})A_{1}(-\frac{\pi}{2})  R_{1}(\beta^{'}_{h_{2}-1})A_{1}(\frac{\pi}{2})R_{1}(\alpha^{'}_{h_{2}-2})A_{1}(-\frac{\pi}{2})...\\
		&
		R_{1}(\beta^{'}_{4})A_{1}(\frac{\pi}{2})R_{1}(\alpha^{'}_{3})A_{1}(-\frac{\pi}{2})R_{1}(\beta^{'}_{2})A_{1}(\frac{\pi}{2})S_{1}
		\vert e_{4}\rangle.
	\end{aligned}
\end{equation}
Then using 
$
A_1(\alpha+\beta)=R_1(\beta)A_1(\alpha)R_1(-\beta)
$ 
and 
$
R_1(\theta)R_1(-\theta)=I
$, we have 
\begin{equation}
	\begin{aligned}
		\vert \psi_{h_{2}}\rangle
		\sim&	S_1A_1(\frac{\pi}{2})A_1(-\frac{\pi}{2}+\alpha^{'}_{h_{2}})A_1(\frac{\pi}{2}+\alpha^{'}_{h_{2}}+\beta^{'}_{h_{2}-1})...\\
		&A_1(-\frac{\pi}{2}+\alpha^{'}_{h_{2}}+\beta^{'}_{h_{2}-1}+...+\beta^{'}_4+\alpha^{'}_{3})
		A_1(\frac{\pi}{2}+\alpha^{'}_{h_{2}}+\beta^{'}_{h_{2}-1}+...+\beta^{'}_2)S_1\vert e_{4}\rangle.
	\end{aligned}
\end{equation}
The target state of the second stage can be denoted as $
\vert target\rangle =\frac{1}{\sqrt{n}}\sum\limits_{v} \vert rv\rangle=\vert e_{1}\rangle=(1,0,0,0)^T
$. 
So the fidelity of the second stage is 
\begin{equation}\label{F2}
	\begin{aligned}
		F_{2}=
		&\vert \langle e_{1}\vert 
		S_1A_1(\frac{\pi}{2})A_1(-\frac{\pi}{2}+\alpha^{'}_{h_{2}})A_1(\frac{\pi}{2}+\alpha^{'}_{h_{2}}+\beta^{'}_{h_{2}-1})...\\
		&A_1(-\frac{\pi}{2}+\alpha^{'}_{h_{2}}+\beta^{'}_{h_{2}-1}+...+\beta^{'}_4+\alpha^{'}_{3})
		A_1(\frac{\pi}{2}+\alpha^{'}_{h_{2}}+\beta^{'}_{h_{2}-1}+...+\beta^{'}_2)S_1\vert e_{4}\rangle \vert ^{2}. 
	\end{aligned}
\end{equation}
There exists a set of parameters $\alpha_{i}^{'}$, $\beta_{i}^{'}$, then the value of fidelity $F_{2}$ will greater than or equal to $1-\epsilon_{2}$. It can be shown in theorem \ref{thm2}.
\begin{thm}\label{thm2}
	Let $\alpha^{'}_{k}=-\beta^{'}_{h_{2}+2-k}=2arccot(tan(\frac{(k-1)\pi}{h_{2}})\sqrt{1-\gamma_{2}^2})$
	for
	$k=3,5,7...,h_{2},$ where 
	$\gamma_{2}=\frac{1}{cos(\frac{1}{h_{2}}arccos(\frac{1}{\sqrt{\epsilon_{2}}}))}$
	, and ensure $
	h_{2} \ge ln(\frac{2}{\sqrt{\epsilon_{2}}})\sqrt{m}
	$, then the value of fidelity $F_{2} \ge 1-\epsilon_{2}$. 
\end{thm}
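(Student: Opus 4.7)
The plan is to mirror the proof of Theorem \ref{thm1} essentially line-for-line, exploiting the parallel structure of the two stages. First I would impose the symmetry condition $\alpha'_k = -\beta'_{h_2+2-k}$ on Eq. (\ref{F2}), telescope the arguments of the consecutive $A_1$ operators into a single sequence $\phi'_1,\ldots,\phi'_{h_2}$ of accumulated phases, and factor off the leftmost $S_1 A_1(\tfrac{\pi}{2})$ so that the fidelity takes the form
\begin{equation*}
F_2 = |\langle e_1 | S_1 A_1(\tfrac{\pi}{2}) A_1(\phi'_{h_2}) A_1(\phi'_{h_2-1}) \cdots A_1(\phi'_1) S_1 | e_4 \rangle|^2,
\end{equation*}
with the differences $\phi'_{k+1} - \phi'_k$ depending only on the free parameters $\alpha'_k$ (the $\beta'_k$ having been eliminated by the symmetry).

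Next, as in the first-stage analysis, I would decompose the action on $|e_4\rangle$ into three steps: the rightmost $S_1$ sends $|e_4\rangle = (0,0,0,1)^T$ to $|e_3\rangle = (0,0,1,0)^T$; the interior product of the $A_1(\phi'_k)$ matrices preserves the $\{|e_2\rangle,|e_3\rangle\}$ subspace and generates a recurrence of exactly the shape assumed in Lemma \ref{lem1}; and finally $S_1 A_1(\tfrac{\pi}{2})$ on the left maps the resulting intermediate vector so that its overlap with $|e_1\rangle$ selects the component of interest. Writing $b'_k$ for the coefficient of $|e_3\rangle$ after $k$ applications, I would read off $b'_0 = 1$, $b'_1 = x$ with $x = \cos(\omega_1/2) = \sqrt{1-1/m}$, and show
\begin{equation*}
b'_k(x) = x(1 + e^{-i(\phi'_{k-1}-\phi'_k)}) b'_{k-1}(x) - e^{-i(\phi'_{k-1}-\phi'_k)} b'_{k-2}(x).
\end{equation*}
The stated choice $\alpha'_k = 2\operatorname{arccot}(\tan(\tfrac{(k-1)\pi}{h_2})\sqrt{1-\gamma_2^2})$ is calibrated so that $\phi'_{k+1} - \phi'_k = (-1)^k \pi - 2\operatorname{arccot}(\tan(\tfrac{k\pi}{h_2})\sqrt{1-\gamma_2^2})$, matching the hypothesis of Lemma \ref{lem1} verbatim and yielding $b'_{h_2}(x) = T_{h_2}(x/\gamma_2)/T_{h_2}(1/\gamma_2)$.

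From here the final bound is routine: the fidelity reduces to $F_2 = 1 - |b'_{h_2}(x)|^2 = 1 - \epsilon_2\, T_{h_2}^2\!\left(\cos(\tfrac{1}{h_2}\arccos(\tfrac{1}{\sqrt{\epsilon_2}}))\sqrt{1 - 1/m}\right)$, and the same chain of elementary estimates used at the end of the proof of Theorem \ref{thm1} (the inequality $x \ge \tanh(x)$ for $x\ge 0$, together with $\arccos(z) = \tfrac{1}{i}\ln(z + \sqrt{z^2-1})$ and $\tan(iz) = i\tanh(z)$) shows the argument of $T_{h_2}$ stays strictly below $1$ as soon as $h_2 \ge \ln(2/\sqrt{\epsilon_2})\sqrt{m}$, giving $F_2 \ge 1 - \epsilon_2$.

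The main obstacle I anticipate is in the first step: the product appearing in Eq. (\ref{eq2}) has a slightly different layout from Eq. (\ref{eq1}) — the extra $A_1(\tfrac{\pi}{2})$ on the far left, the fact that $S_1$ acts on $|e_4\rangle$ rather than on $|e_1\rangle$, and the sign flip $\alpha'_k = +2\operatorname{arccot}(\cdots)$ versus the first-stage $\beta_k = -2\operatorname{arccot}(\cdots)$ all need to be tracked carefully so that the resulting phase differences land on the correct branch demanded by Lemma \ref{lem1}. Once this bookkeeping is reconciled, the remainder of the argument is essentially a transcription of the first-stage proof with $h_1,\epsilon_1,\gamma_1$ replaced by $h_2,\epsilon_2,\gamma_2$.
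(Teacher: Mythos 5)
Your proposal follows the paper's proof of Theorem \ref{thm2} essentially step for step: impose the symmetry $\alpha'_k=-\beta'_{h_2+2-k}$, telescope the product into $S_1A_1(\eta_{h_2})\cdots A_1(\eta_1)S_1$ acting on $\vert e_4\rangle$, track the coefficient of $\vert e_3\rangle$ through the three-term recurrence, invoke Lemma \ref{lem1} to identify it with $T_{h_2}(x/\gamma_2)/T_{h_2}(1/\gamma_2)$, and close with the same $\tanh$/$\arccos$ estimates as in Theorem \ref{thm1}. The only discrepancy is a harmless off-by-one in your factor count (the paper absorbs the leftmost $A_1(\tfrac{\pi}{2})$ into the sequence as $\eta_{h_2}=\tfrac{\pi}{2}$ rather than leaving it outside), which is exactly the bookkeeping you flagged and which resolves consistently with the degree-$h_2$ polynomial you obtain.
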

\begin{proof}[Proof. ]
	Let $\alpha^{'}_{k}=-\beta^{'}_{h_{2}+2-k}$ for $k=3,5,7...,h_{2}$. So Eq. (\ref{F2}) can be rewritten as 
	\begin{equation}\label{F2'''}
		F_{2}=
		\vert \langle e_{1} \vert 
		S_1A_{1}(\eta_{h_{2}})A_{1}(\eta_{h_{2}-1})A_{1}(\eta_{h_{2}-2})...A_{1}(\eta_{2})A_{1}(\eta_{1})S_1\vert e_{4}\rangle\vert ^{2}, 
	\end{equation}
	where $\eta_{k+1}-\eta_{k}=\pi-\alpha^{'}_{k+1}$ for $ k=2,4,6,...,h_{2}-1 $ and $\eta_{k+1}-\eta_{k}=-\pi+\alpha^{'}_{h-k+1} $ for $k=1,3,5,...,h_{2}-2$.
	
	The formula  $S_1A_{1}(\eta_{h_{2}})A_{1}(\eta_{h_{2}-1})A_{1}(\eta_{h_{2}-2})...A_{1}(\eta_{2})A_{1}(\eta_{1})S_1\vert e_{4} \rangle$ in Eq. (\ref{F2'''}) can be viewed as the operator $S_1A_{1}(\eta_{h_{2}})A_{1}(\eta_{h_{2}-1})A_{1}(\eta_{h_{2}-2})...A_{1}(\eta_{2})A_{1}(\eta_{1})S_1$ applied to $\vert e_{4} \rangle $. So it can be divided into three steps as follows.
	\begin{equation}
		\begin{pmatrix}
			0\\0\\0\\1
		\end{pmatrix}\xrightarrow[\textcircled{1}]{S_1}
		\begin{pmatrix}
			0\\0\\1\\0
		\end{pmatrix}\xrightarrow[\textcircled{2}]{A_{1}(\eta_{h_{2}})A_{1}(\eta_{h_{2}-1})A_{1}(\eta_{h_{2}-2})...A_{1}(\eta_{2})A_{1}(\eta_{1})}
		\begin{pmatrix}
			0\\b_{h_{2}}(x)\\c_{h_{2}}(x)\\0
		\end{pmatrix}
		\xrightarrow[\textcircled{3}]{S_1}
		\begin{pmatrix}
			b_{h_{2}}(x)\\0\\0\\c_{h_{2}}(x)
		\end{pmatrix}\nonumber
	\end{equation}
	Then after calculations in step $\textcircled{2}$ like in the proof of theorem \ref{thm1}, the recurrence formula of $c_{k}(x)$ can be defined by $c_0(x)=1,c_1(x)=x$ and for $k=2,3,4,...,h_{2}$,
	\begin{equation}
		c_{k}(x)=x(1+e^{-i(\eta_{k}-\eta_{k-1})})c_{k-1}(x)-e^{-i(\eta_{k}-\eta_{k-1})}c_{k-2}(x),
	\end{equation}
	with $x=cos(\frac{\omega_1}{2})$.
	
	Let $\alpha^{'}_{k}=2arccot(tan(\frac{(k-1)\pi}{h_{2}})\sqrt{1-\gamma_{2}^2})$
	for
	$k=3,5,7...,h_{2},$ where 
	$\gamma_{2}=\frac{1}{cos(\frac{1}{h_{2}}arccos(\frac{1}{\sqrt{\epsilon_{2}}}))}$. Then we get $\eta_{k+1}-\eta_{k}=(-1)^{k}\pi-2arccot(tan(\frac{k\pi}{h})\sqrt{1-\gamma_{2}^2})$.
	By using lemma \ref{lem1}, we obtain
	\begin{equation}
		c_{h_{2}}(x)=\frac{T_{h_{2}}(\frac{x}{\gamma_{2}})}{T_{h_{2}}(\frac{1}{\gamma_{2}})}
		=\sqrt{\epsilon_{2}}T_{h_{2}}(cos(\frac{1}{h_{2}}arccos(\frac{1}{\sqrt{\epsilon_{2}}}))\sqrt{1-\frac{1}{m}}).
	\end{equation}
	So the fidelity of the second stage can be calculated as follow.
	\begin{equation}
		F_2=1-\vert c_{h_{2}}(x)\vert ^2=1-\epsilon_{2} T_{h_{2}}^2(cos(\frac{1}{h_{2}}arccos(\frac{1}{\sqrt{\epsilon_{2}}}))\sqrt{1-\frac{1}{m}})
	\end{equation}
	Let $
	h_{2} \ge ln(\frac{2}{\sqrt{\epsilon_{2}}})\sqrt{m}
	$. Similar to the proof of the theorem \ref{thm1}, we obtain
	$
	F_2\ge 1-\epsilon_{2}
	$.
\end{proof}

Therefore, let $\alpha^{'}_{k}=-\beta^{'}_{h_{2}+2-k}=2arccot(tan(\frac{(k-1)\pi}{h_{2}})\sqrt{1-\gamma_{2}^2})$
for
$k=3,5,7...,h_{2},$ where 
$\gamma_{2}=\frac{1}{cos(\frac{1}{h_{2}}arccos(\frac{1}{\sqrt{\epsilon_{2}}}))}$
, and ensure $
h_{2} \ge ln(\frac{2}{\sqrt {\epsilon_{2}}})\sqrt{m}
$, the uniform superposition state of the vertices on the other side of the sender will be transferred to the target state with the fidelity of at least $1-\epsilon_{2}$.

\subsection{The fidelity of the quantum state transfer algorithm}\label{3.3}
Since the sender and receiver are in the same partition of the complete bipartite graph (shown in Fig. \ref{senderAndReceiver}), the analysis of the algorithm can be simplified in an invariant subspace with the orthogonal
basis $ \{\vert e_{1}\rangle, \vert e_{2}\rangle, \vert e_{3}\rangle, \vert e_{4}\rangle, \vert e_{5}\rangle, \vert e_{6}\rangle \}$ given below. the orthogonal basis is only used in \ref{3.3}.
\begin{equation}
	\begin{aligned}
		&\vert e_{1}\rangle=\frac{1}{\sqrt{n}}\sum\limits_{v}\vert sv\rangle
		,\\
		&\vert e_{2}\rangle=\frac{1}{\sqrt{n}}\sum\limits_{v}\vert vs\rangle
		,\\
		&\vert e_{3}\rangle=\frac{1}{\sqrt{n}}\sum\limits_{v}\vert rv\rangle
		,\\
		&\vert e_{4}\rangle=\frac{1}{\sqrt{n}}\sum\limits_{v}\vert vr\rangle
		,\\
		&\vert e_{5}\rangle=\frac{1}{\sqrt{n(m-2)}}\sum\limits_{u,v}\vert uv\rangle,\\
		&\vert e_{6}\rangle=\frac{1}{\sqrt{n(m-2)}}\sum\limits_{v,u}\vert vu\rangle. 
	\end{aligned}.
\end{equation}
\begin{figure}[H]
	\centering
	\includegraphics[width=0.4\textwidth]{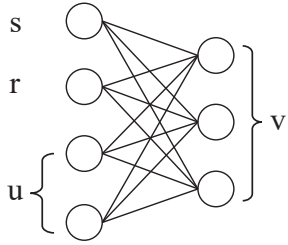}
	\caption{The sender and the receiver are in the same partition. }
	\label{senderAndReceiver}
\end{figure}
From the analysis of the first stage in \ref{3.1}, we know 
$SA(\phi_{h_{1}})A(\phi_{h_{1}-1})...A(\phi_{2})A(\phi_{1})S\vert \psi_0\rangle=(b_{h_{1}}(x),0,0,c_{h_{1}}(x))^{T}$
where $\phi_{h_{1}}=-\frac{\pi}{2}$. And we know $\vert \psi_{h_{1}}\rangle \sim A(\phi_{h_{1}-1})A(\phi_{h_{1}-2})...A(\phi_{2})A(\phi_{1})S\vert \psi_0\rangle$. So we can obtain 
$\vert \psi_{h_{1}}\rangle \sim (0,\sqrt{\frac{m-1}{m}}b_{h_{1}}(x)+\frac{1}{\sqrt{m}}c_{h_{1}}(x),-\frac{1}{\sqrt{m}}b_{h_{1}}(x)+\sqrt{\frac{m-1}{m}}c_{h_{1}}(x),0)^T$.

So in the new basis, the state $\vert \psi_{h_{1}} \rangle$ can be rewritten as 
\begin{equation}
	\vert \psi_{h_{1}}\rangle \sim t_{1}\vert \Psi\rangle+t_{2}\vert e_{2}\rangle,
\end{equation}
where $t_{1}=c_{h_{1}}(x)-\frac{1}{\sqrt{m-1}}b_{h_{1}}(x)$,  $t_{2}=\frac{\sqrt{m}}{\sqrt{m-1}}b_{h_{1}}(x)$,
 and $\vert \Psi \rangle =\frac{1}{\sqrt{m}}\vert e_{2}\rangle +\frac{1}{\sqrt{m}}\vert e_{4}\rangle +\frac{\sqrt{m-2}}{\sqrt{m}}\vert e_{6}\rangle.$ $\vert \Psi \rangle$ denotes the target state of the first stage. 

In the second stage, we have 
\begin{equation}
	\vert \psi_{h_{2}}\rangle \sim t_{1}U_{2}\vert \Psi\rangle+t_{2}U_{2}\vert e_{2}\rangle,
\end{equation}
where $U_{2}$ denotes the evolution operators of the second stage. 

Let $t_{1}U_{2}\vert \Psi\rangle=(f_{1},0,f_{3},0,f_{5},0)^{T}$ and $t_{2}U_{2}\vert e_{2}\rangle=(g_{1},0,g_{3},0,g_{5},0)^{T}$, where $\vert f_{1} \vert^{2} +\vert f_{3} \vert^{2}+\vert f_{5} \vert^{2}=\vert t_{1} \vert ^{2}$ and $\vert g_{1} \vert^{2} +\vert g_{3} \vert^{2}+\vert g_{5} \vert^{2}=\vert t_{2} \vert ^{2}.$ And we can obtain the following equation.

\begin{equation}\label{60}
	\vert f_{1}+g_{1}\vert^{2}+\vert f_{3}+g_{3}\vert^{2}+\vert f_{5}+g_{5}\vert^{2}=1
\end{equation}

The target state of the algorithm is $\vert e_{3}\rangle. $
So the fidelity of the algorithm can be denoted as 
\begin{equation}\label{61}
	F=\bm{\big\vert} f_{3} + g_{3}\bm{\big\vert} ^{2}.
\end{equation}

From Eq. (\ref{60}) and Eq. (\ref{61}), we can obtain 
\begin{equation}\label{62}
	F=1-\bm{\big\vert} f_{1}+g_{1}\bm{\big\vert} ^{2}-\bm{\big\vert} f_{5}+g_{5}\bm{\big\vert} ^{2}.
\end{equation}
By using $\vert x+y \vert \leq \vert \vert x \vert +\vert y \vert \vert $, we can obtain
\begin{equation}
	F\geq 1-\bm{\big\vert} f_{1}\bm{\big\vert} ^{2}-\bm{\big\vert} g_{1}\bm{\big\vert}^{2}-2\bm{\big\vert} f_{1}\bm{\big\vert} \bm{\big\vert} g_{1}\bm{\big\vert}
	-\bm{\big\vert} f_{5}\bm{\big\vert} ^{2}-\bm{\big\vert} g_{5}\bm{\big\vert}^{2}-2\bm{\big\vert} f_{5}\bm{\big\vert} \bm{\big\vert} g_{5}\bm{\big\vert} 
	.
\end{equation}
From \ref{3.2}, 
we know $\bm{\big\vert} f_{1}\bm{\big\vert} ^{2}+\bm{\big\vert} f_{5}\bm{\big\vert} ^{2}\textless \epsilon_{2}$. 
And we know $\bm{\big\vert} g_{1}\bm{\big\vert} ^{2}+\bm{\big\vert} g_{5}\bm{\big\vert} ^{2}\leq \vert t_{2} \vert^{2} \leq 2\epsilon_{1}$. Then we obtain $\bm{\big\vert} f_{1}\bm{\big\vert} 
\bm{\big\vert} g_{1}\bm{\big\vert}+
\bm{\big\vert} f_{5}\bm{\big\vert} 
\bm{\big\vert} g_{5}\bm{\big\vert} 
\leq \sqrt{(\bm{\big\vert} f_{1}\bm{\big\vert}^2+\bm{\big\vert} f_{5}\bm{\big\vert}^2)
	(\bm{\big\vert} g_{1}\bm{\big\vert}^2+\bm{\big\vert} g_{5}\bm{\big\vert}^2)
}\textless \sqrt{2\epsilon_{1}\epsilon_{2}}$. So we have
\begin{equation}\label{lowerbound}
	F \textgreater 1-\epsilon_{2}-2\epsilon_{1}-2\sqrt{2}\sqrt{\epsilon_{1}\epsilon_{2}}.
\end{equation}
From Eq. (\ref{lowerbound}), we know that the fidelity will be close to 1 when $\epsilon_{1}$ and $\epsilon_{2}$ are small.
For instance, let $\epsilon_{1}$ be $0.01$ and $\epsilon_{2}$ be $0.01$. From Eq. (\ref{lowerbound}), we know the fidelity will be greater than $0.94$ regardless of the value of $m$ and $n$.
The simulation results of the algorithm are shown in Fig. \ref{fidelity}. The fidelity is bigger than 0.98 at a certain range of $m$ and $n$ when $\epsilon_{1}=0.01$ and $\epsilon_{2}=0.01$. 
It further verifies that the quantum state transfer algorithm can achieve high fidelity. 
\begin{figure}[H]
	\centering
	\includegraphics[width=0.6\textwidth]{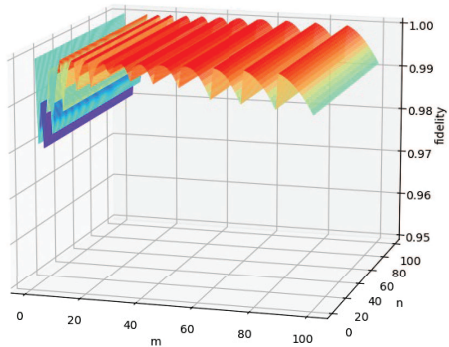}
	\caption{The fidelity of the quantum state transfer algorithm with $\epsilon_{1}=0.01$, $\epsilon_{2}=0.01$ when the sender and receiver are in the same partition. }
	\label{fidelity}
\end{figure}
\section{Sender and receiver in different partitions}\label{sec4}
In this section, we propose the quantum state transfer algorithm when the sender and receiver are in different partitions. As shown in Fig. \ref{second_case}, the sender is on the left side of the complete bipartite graph and the receiver is on the right side of it. The left side of the complete bipartite graph has $m$ vertices and the right side of it has $n$ vertices.
\begin{figure}[H]
	\centering
	\includegraphics[width=0.4\textwidth]{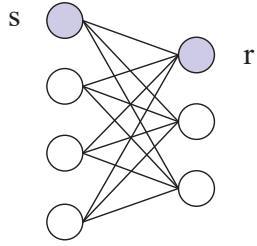}
	\caption{The sender is on the left side of the complete bipartite graph and the receiver is on the right side of it.}
	\label{second_case}
\end{figure}
Our algorithm is as follows.
\begin{algorithm}[H]
	\caption{Quantum state transfer algorithm (the sender and receiver in different partitions) }\label{algo1}
	\renewcommand{\algorithmicrequire}{\textbf{Input:}}
	\begin{algorithmic}
		\Require the initial state$\vert\psi_{0}\rangle$, parameters $\epsilon_{1}$ and $\epsilon_{2}$.
	\end{algorithmic}
	\textbf{First stage:}
	\begin{algorithmic}
		\State \textbf{Initialization: }
		\State {\indent}Let $h_{1}$ be an odd integer and ensure $h_{1} \ge ln(\frac{2}{\sqrt{\epsilon_{1}}})\sqrt{m}$.
		\State {\indent}Let $\beta_{k}=-\alpha_{h_{1}+2-k}=-2arccot(tan(\frac{(k-1)\pi}{h_{1}})\sqrt{1-\gamma_{1}^2})
		$ for
		$k=3,5,7,...,h_{1}$, where $\gamma_{1}=\frac{1}{cos(\frac{1}{h_{1}}arccos(\frac{1}{\sqrt{\epsilon_{1}}}))}$. The other $\alpha_{i}$ and $\beta_{i}$ can be any value.
		\State \textbf{Perform the evolution operators: }
		\State{\indent $\vert \psi_{h_{1}}\rangle=U(\alpha_{h_{1}},\beta_{h_{1}})U(\alpha_{h_{1}-1},\beta_{h_{1}-1})...U(\alpha_{2},\beta_{2})U(\alpha_{1},\beta_{1})\vert\psi_{0}\rangle$}
	\end{algorithmic}
	\textbf{Second stage:} 
	\begin{algorithmic}
		\State \textbf{Initialization: }
		\State {\indent}Let $h_{2}$ be an even integer and ensure $h_{2} \ge ln(\frac{2}{\sqrt{\epsilon_{2}}})\sqrt{n}$.
		\State {\indent}Let $
		\alpha^{'}_{k}=-\beta^{'}_{h_2+2-k}=2arccot(tan(\frac{k\pi}{h_2+1})\sqrt{1-\gamma_2^2})
		$ for
		$k=2,4,6,...,h_{2}$, where $\gamma_2=\frac{1}{cos(\frac{1}{(h_2+1)}arccos(\frac{1}{\sqrt{\epsilon_2}}))}$. 
		The other $\alpha_{i}^{'}$ and $\beta_{i}^{'}$ can be any value.
		\State \textbf{Perform the evolution operators: }\State{\indent $\vert \psi_{h_{2}}\rangle=U(\alpha^{'}_{h_{2}},\beta^{'}_{h_{2}})U(\alpha^{'}_{h_{2}-1},\beta^{'}_{h_{2}-1})...U(\alpha^{'}_{2},\beta^{'}_{2})U(\alpha^{'}_{1},\beta^{'}_{1})\vert\psi_{h_{1}}\rangle$}
	\end{algorithmic}
\end{algorithm}
Our algorithm is divided into two stages. The purpose of the first stage is to transfer the initial state to the uniform superposition state of the vertices on the other side of the sender. In the first stage, only the sender is the marked vertex. And the second stage is to transfer the uniform superposition state of the vertices on the other side of the sender to the target state. In the second stage, only the receiver is the marked vertex.

The analysis of the first stage and the second stage are shown in \ref{4.1} and \ref{4.2} respectively.
The analysis of the fidelity of the quantum state transfer algorithm is shown in \ref{4.3}. 
\subsection{The first stage of the quantum state transfer algorithm}\label{4.1}
The first stage of the quantum state transfer algorithm is the same when the sender and receiver are in the same partition or different partitions. Therefore, the analysis of the first stage can be viewed in section \ref{3.1}. 
\subsection{The second stage of the quantum state transfer algorithm}\label{4.2}
In the second stage, only the receiver is marked (shown in Fig. \ref{receiver1}). Thus the analysis can be simplified in an invariant subspace with the orthogonal
basis $\{\vert e_{1}\rangle, \vert e_{2}\rangle, \vert e_{3}\rangle, \vert e_{4}\rangle \}$ given below. The orthogonal basis is only used in \ref{4.2}. 
\begin{equation}
	\begin{aligned}
		&\vert e_{1}\rangle=\frac{1}{\sqrt{m}}\sum\limits_{v}\vert rv\rangle,\\
		&\vert e_{2}\rangle=\frac{1}{\sqrt{m}}\sum\limits_{u}\vert ur\rangle,\\
		&\vert e_{3}\rangle=\frac{1}{\sqrt{m(n-1)}}\sum\limits_{u,v}\vert uv\rangle,\\
		&\vert e_{4}\rangle=\frac{1}{\sqrt{m(n-1)}}\sum\limits_{v,u}\vert vu\rangle.
	\end{aligned}
\end{equation}
\begin{figure}[H]
	\centering
	\includegraphics[width=0.4\textwidth]{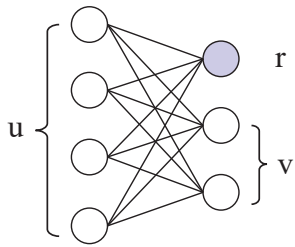}
	\caption{Only the receiver is marked in the second stage. }\label{receiver1}
\end{figure}
The flip-flop shift operator $S_{2}$, the query oracle $Q_{2}(\beta)$ and the coin operator $C_{2}(\alpha)$ can be rewritten as
\begin{equation}
	S_2=\begin{pmatrix}
		0&1&0&0\\
		1&0&0&0\\
		0&0&0&1\\
		0&0&1&0\\
	\end{pmatrix},
	Q_2(\beta)=\begin{pmatrix}
		e^{i\beta}&0&0&0\\
		0&1&0&0\\
		0&0&1&0\\
		0&0&0&1\\
	\end{pmatrix},
\end{equation}
and
\begin{equation}
	C_2(\alpha)=\begin{pmatrix}
		-e^{-i\alpha}&0&0&0\\
		0&\frac{(1-e^{-i\alpha})(1-cos(\omega_2))}{2}-1&\frac{(1-e^{-i\alpha})sin(\omega_2)}{2}&0\\
		0&\frac{(1-e^{-i\alpha})sin(\omega_2)}{2}&\frac{(1-e^{-i\alpha})(1+cos(\omega_2))}{2}-1&0\\
		0&0&0&-e^{-i\alpha}\\
	\end{pmatrix},
\end{equation}
where
$cos(\omega_2)=1-\frac{2}{n}$ and $sin(\omega_2)=\frac{2}{n}\sqrt{n-1}$.

In the second stage, we know 
\begin{equation}
	\vert \psi_{h_{2}}\rangle=S_2C_2(\alpha^{'}_{h_{2}})Q_2(\beta^{'}_{h_{2}})S_2C_2(\alpha^{'}_{h_{2}-1})Q_2(\beta^{'}_{h_{2}-1})...S_2C_2(\alpha^{'}_{1})Q_2(\beta^{'}_{1})\vert \psi_{h_{1}}\rangle.  
\end{equation}
The coin operator $C_{2}(\alpha)$ can be denoted as 
\begin{equation}
	C_2(\alpha)=e^{-\frac{i\alpha}{2}}A_{2}(-\frac{\pi}{2})R_{2}(\alpha)A_{2}(\frac{\pi}{2}),
	\label{C2}
\end{equation}
where
\begin{equation}
	R_{2}(\theta)=-\begin{pmatrix}
		e^{-\frac{i\theta}{2}}&0&0&0\\
		0&e^{\frac{i\theta}{2}}&0&0\\
		0&0&e^{-\frac{i\theta}{2}}&0\\
		0&0&0&e^{-\frac{i\theta}{2}}\\
	\end{pmatrix},
\end{equation}
and 
\begin{equation}
	A_{2}(\theta)=\begin{pmatrix}
		1&0&0&0\\
		0&cos(\frac{\omega_2}{2})&-ie^{i\theta}sin(\frac{\omega_2}{2})&0\\
		0&-ie^{-i\theta}sin(\frac{\omega_2}{2})&cos(\frac{\omega_2}{2})&0\\
		0&0&0&1\\
	\end{pmatrix}.
\end{equation}
The query oracle $Q_{2}(\beta)$ can be denoted as 
\begin{equation}
	Q_2(\beta)=-e^{\frac{i\beta}{2}}S_2R_{2}({\beta})S_2.
	\label{Q2}
\end{equation}
And we find the equation
\begin{equation}
	S_2B_1S_2B_2S_2=B_2S_2B_1,
	\label{S2}
\end{equation}
where
$B_1=\prod_{i=0}^{n_1}D_i,B_2=\prod_{i=0}^{n_2}D_i$, for $D_i\in{A_{2}(\theta_i),R_{2}(\theta_i)}$.

Then by using Eq. (\ref{C2}), Eq. (\ref{Q2}) and  Eq. (\ref{S2}), we have 
\begin{equation}\label{eq3}
	\begin{aligned}
		\vert \psi_{h_{2}}\rangle
		\sim&
		S_2A_{2}(\frac{\pi}{2})R_{2}(\alpha^{'}_{h_{2}})A_{2}(-\frac{\pi}{2})R_{2}(\beta^{'}_{h_{2}-1})A_{2}(\frac{\pi}{2})R_{2}(\alpha^{'}_{h_{2}-2})A_{2}(-\frac{\pi}{2})...\\&R_{2}(\beta^{'}_{5})A_{2}(\frac{\pi}{2})R_{2}(\alpha^{'}_{4})A_{2}(-\frac{\pi}{2})R_{2}(\beta^{'}_{3})A_{2}(\frac{\pi}{2})R_{2}(\alpha^{'}_{2})A_{2}(-\frac{\pi}{2})R_{2}(\beta^{'}_1)\\
		&S_2R_{2}(\beta^{'}_{h_{2}})A_{2}(\frac{\pi}{2})R_{2}(\alpha^{'}_{h_{2}-1})A_{2}(-\frac{\pi}{2})...R_{2}(\beta^{'}_2)A_{2}(\frac{\pi}{2})R_{2}(\alpha^{'}_{1})A_{2}(-\frac{\pi}{2})\vert \psi_{h_{1}}\rangle, 
	\end{aligned}
\end{equation}
where $h_{2}$ is an even integer.

The state $\vert \psi_{h_{1}} \rangle$ can be rewritten as 
$
\vert \psi_{h_{1}}\rangle\approx\vert \Psi \rangle=S_2A_{2}(\frac{\pi}{2})\vert e_{3}\rangle. 
$ Then we eliminate invalid $A_{2}(\theta)$ and $R_{2}(\theta)$. So Eq. (\ref{eq3}) can be simplified to 
\begin{equation}
	\begin{aligned}
		\vert \psi_{h_{2}}\rangle
		\sim&
		S_2A_{2}(\frac{\pi}{2})R_{2}(\alpha^{'}_{h_{2}})A_{2}(-\frac{\pi}{2})R_{2}(\beta^{'}_{h_{2}-1})A_{2}(\frac{\pi}{2})R_{2}(\alpha^{'}_{h_{2}-2})A_{2}(-\frac{\pi}{2})...\\&R_{2}(\beta^{'}_{5})A_{2}(\frac{\pi}{2})R_{2}(\alpha^{'}_{4})A_{2}(-\frac{\pi}{2})R_{2}(\beta^{'}_{3})A_{2}(\frac{\pi}{2})R_{2}(\alpha^{'}_{2})A_{2}(-\frac{\pi}{2})R_{2}(\beta^{'}_1)A_{2}(\frac{\pi}{2})\vert e_{3}\rangle. 
	\end{aligned}
\end{equation}
Then by using
$
	A_{2}(\alpha+\beta)=R_{2}(\beta)A_{2}(\alpha)R_{2}(-\beta)
$
and 
$
	R_{2}(\theta)R_{2}(-\theta)=I
$, we obtain 
\begin{equation}
	\begin{aligned}
		\vert \psi_{h_{2}}\rangle\sim& S_2A_{2}(\frac{\pi}{2})A_{2}(-\frac{\pi}{2}+\alpha^{'}_{h_{2}})A_{2}(\frac{\pi}{2}+\alpha^{'}_{h_{2}}+\beta^{'}_{h_{2}-1})...\\
		&A_{2}(-\frac{\pi}{2}+\alpha^{'}_{h_{2}}+\beta^{'}_{h_{2}-1}+...+\alpha_{2})A_{2}(\frac{\pi}{2}+\alpha^{'}_{h_{2}}+\beta^{'}_{h_{2}-1}+...+\alpha_{2}+\beta_{1})\vert e_{3}\rangle.
	\end{aligned}
\end{equation}
The target state of the second stage is $\vert target\rangle=\frac{1}{\sqrt{m}}\sum\limits_{v}\vert rv\rangle=\vert e_{1}\rangle$. So the fidelity of the second stage can be calculated as follow.
\begin{equation} \label{F2'}
	\begin{aligned}
		F_{2}&=\vert \langle e_{1} \vert S_2A_{2}(\frac{\pi}{2})A_{2}(-\frac{\pi}{2}+\alpha^{'}_{h_{2}})A_{2}(\frac{\pi}{2}+\alpha^{'}_{h_{2}}-\alpha^{'}_{2})...A_{2}(-\frac{\pi}{2}+\alpha^{'}_{h_{2}})A_{2}(\frac{\pi}{2})\vert e_{3}\rangle \vert^{2}
	\end{aligned}
\end{equation}
There exists a set of parameters $\alpha_{i}^{'}$, $\beta_{i}^{'}$, then the value of fidelity $F_{2}$ will greater than or equal to $1-\epsilon_{2}$. It can be shown in theorem \ref{thm3}.
\begin{thm}\label{thm3}
	Let $
	\alpha^{'}_{k}=-\beta^{'}_{h_2+1-k}=2arccot(tan(\frac{k\pi}{h_2+1})\sqrt{1-\gamma_2^2})
	$, for
	$k=2,4,6,...,h_{2},$ where 
	$\gamma_2=\frac{1}{cos(\frac{1}{(h_2+1)}arccos(\frac{1}{\sqrt{\epsilon_2}}))}$
	, and ensure $
	h_2 \ge ln(\frac{2}{\sqrt{\epsilon_2}})\sqrt{n}-1
	$, then the value of fidelity $F_{2} \ge 1-\epsilon_{2}$. 
\end{thm}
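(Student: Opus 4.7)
The plan is to mirror the proof of Theorem \ref{thm2} with three adjustments forced by the opposite-partition geometry. The relevant dimension parameter is now $n$ rather than $m$, so $x=\cos(\omega_2/2)=\sqrt{1-1/n}$. Equation (\ref{F2'}) contains $h_2+1$ consecutive $A_2$ factors instead of $h_2$, which is why $h_2$ is required to be even, so that $h_2+1$ is an odd integer $\ge 3$ as demanded by Lemma \ref{lem1}. The substitution therefore pairs $\alpha'_k$ with $\beta'_{h_2+1-k}$ rather than $\beta'_{h_2+2-k}$, and $\gamma_2$ is built from $\arccos(1/\sqrt{\epsilon_2})/(h_2+1)$.

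First I would plug $\alpha'_k=-\beta'_{h_2+1-k}$ (for even $k=2,4,\dots,h_2$) into the telescoping sums of phases that appear inside each $A_2$ factor of (\ref{F2'}); every pair cancels, and the fidelity collapses to
\begin{equation*}
F_2=\bigl\vert\langle e_1\vert\, S_2\, A_2(\eta_{h_2+1})A_2(\eta_{h_2})\cdots A_2(\eta_1)\, \vert e_3\rangle\bigr\vert^2,
\end{equation*}
where a direct computation shows $\eta_{k+1}-\eta_k=(-1)^k\pi-2\,\mathrm{arccot}\bigl(\tan(k\pi/(h_2+1))\sqrt{1-\gamma_2^2}\bigr)$ for $k=1,\dots,h_2$. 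This is exactly the hypothesis of Lemma \ref{lem1} with $h=h_2+1$.

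Next I would split the action into the three-step diagram used in the proof of Theorem \ref{thm2}: the $h_2+1$ matrices $A_2$ act on $\vert e_3\rangle=(0,0,1,0)^T$ to produce a state $(0,b_{h_2+1},c_{h_2+1},0)^T$; $S_2$ maps this to $(b_{h_2+1},0,0,c_{h_2+1})^T$; and $\langle e_1\vert$ extracts $b_{h_2+1}$. Since the evolution is unitary on the $\{\vert e_2\rangle,\vert e_3\rangle\}$ block, $F_2=1-\vert c_{h_2+1}(x)\vert^2$, where $c_k(x)$ satisfies the same Quasi-Chebyshev recurrence as in the proof of Theorem \ref{thm2} but with $x=\sqrt{1-1/n}$. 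Applying Lemma \ref{lem1} with $h=h_2+1$ then yields
\begin{equation*}
c_{h_2+1}(x)=\sqrt{\epsilon_2}\,T_{h_2+1}\left(\cos\left(\frac{1}{h_2+1}\arccos\left(\frac{1}{\sqrt{\epsilon_2}}\right)\right)\sqrt{1-\frac{1}{n}}\right).
\end{equation*}

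Finally, the hypothesis $h_2\ge\ln(2/\sqrt{\epsilon_2})\sqrt{n}-1$ is equivalent to $h_2+1\ge\ln(2/\sqrt{\epsilon_2})\sqrt{n}$, so by the same chain of inequalities $x\ge\tanh x$, $\arccos(z)=\frac{1}{i}\ln(z+\sqrt{z^2-1})$, and $\tan(iz)=i\tanh(z)$ used at the end of the proof of Theorem \ref{thm1} (with $m$ and $h_1$ replaced by $n$ and $h_2+1$), the argument of $T_{h_2+1}$ lies in $[0,1)$, so $T_{h_2+1}^2\le 1$ and $F_2\ge 1-\epsilon_2$. The main obstacle I expect is purely one of indexing: checking that the $+1$ shift propagates consistently through the cancellation of the $A_2$-arguments, through the parity condition of Lemma \ref{lem1} (requiring $h_2+1$ odd, hence $h_2$ even), and through the definition of $\gamma_2$. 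Once this bookkeeping is pinned down, the remainder of the argument is a verbatim adaptation of the proof of Theorem \ref{thm2}.
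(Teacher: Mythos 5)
Your proposal is correct and follows essentially the same route as the paper's own proof: the same substitution $\alpha'_k=-\beta'_{h_2+1-k}$ collapsing the phases, the same reduction to the Quasi-Chebyshev recurrence for $c_{h_2+1}(x)$ with $x=\sqrt{1-1/n}$ via Lemma \ref{lem1} applied with $h=h_2+1$, and the same concluding inequality chain carried over from Theorem \ref{thm1}. The indexing subtleties you flag ($h_2$ even so that $h_2+1$ is odd, and the $h_2+1$ shift inside $\gamma_2$) are exactly the points the paper's argument also relies on.
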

\begin{proof}[Proof. ]
	Let $\beta^{'}_{i}=-\alpha^{'}_{h_{2}+1-i}$, for $i=1,3,5,...,h_{2}-1$. So Eq.(\ref{F2'}) can be rewritten as 
	\begin{equation}\label{F2''}
		\begin{aligned}
			F_{2}=\vert \langle e_{1} \vert S_2A_{2}(\zeta_{h_{2}+1})A_{2}(\zeta_{h_{2}})A_{2}(\zeta_{h_{2}-1})...A_{2}(\zeta_{2})A_{2}(\zeta_{1})\vert e_{3}\rangle \vert^{2},
		\end{aligned}
	\end{equation}
	where
	$\zeta_{k+1}-\zeta_{k}=\pi-\alpha^{'}_{k}$ for $k=2,4,6,...,h_{2}$ and $\zeta_{k+1}-\zeta_{k}=-\pi+\alpha^{'}_{h_{2}-k+1}$ for $k=1,3,5,...,h_{2}-1$. 
	
	The formula $S_2A_{2}(\zeta_{h_{2}+1})A_{2}(\zeta_{h_{2}})A_{2}(\zeta_{h_{2}-1})...A_{2}(\zeta_{2})A_{2}(\zeta_{1})\vert e_{3}\rangle$ in Eq. (\ref{F2''}) can be viewed as  
	the operator $S_2A_{2}(\zeta_{h_{2}+1})A_{2}(\zeta_{h_{2}})A_{2}(\zeta_{h_{2}-1})...A_{2}(\zeta_{2})A_{2}(\zeta_{1})$ applied to $\vert e_{3}\rangle$. So it can be divided into two steps as follow.
	\begin{equation}
		\begin{pmatrix}
			0\\0\\1\\0
		\end{pmatrix}
		\xrightarrow[\textcircled{1}]{A_{2}(\zeta_{h_{2}+1})A_{2}(\zeta_{h_{2}})A_{2}(\zeta_{h_{2}-1})...A_{2}(\zeta_{2})A_{2}(\zeta_{1})}
		\begin{pmatrix}
			0\\b_{h_{2}+1}(x)\\c_{h_{2}+1}(x)\\0
		\end{pmatrix}
		\xrightarrow[\textcircled{2}]{S_2}
		\begin{pmatrix}
			b_{h_{2}+1}(x)\\0\\0\\c_{h_{2}+1}(x)
		\end{pmatrix}\nonumber
	\end{equation}
	
	Then after calculations like the proof of the theorem \ref{thm1}, the recurrence formula of $c_{k}(x)$ can be defined by 
	$
	c_0(x)=1,c_1(x)=x$ and for $k=2,3,4,...,h_{2}+1$, 
	\begin{equation}
		c_{k}(x)=x(1+e^{-i(\zeta_{k}-\zeta_{k-1})})c_{k-1}(x)-e^{-i(\zeta_{k}-\zeta_{k-1})}c_{k-2}(x),
	\end{equation}
	with $x=cos(\frac{\omega_2}{2})$.
	
	Let 
	$
	\alpha^{'}_{k}=2arccot(tan(\frac{k\pi}{h_2+1})\sqrt{1-\gamma_2^2})
	$, for
	$k=2,4,6,...,h_{2},$ where 
	$\gamma_2=\frac{1}{cos(\frac{1}{(h_2+1)}arccos(\frac{1}{\sqrt{\epsilon_2}}))}$. So we have $\zeta_{k+1}-\zeta_{k}=(-1)^{k}\pi-2arctan(tan(\frac{k\pi}{h_{2}})\sqrt{1-\gamma_{2}^2})$. 
	By using lemma \ref{lem1}, we obtain 
	\begin{equation}
		c_{h_2+1}(x)=\frac{T_{h_2+1}(\frac{x}{\gamma_2})}{T_{h_2+1}(\frac{1}{\gamma_2})}
		=\sqrt{\epsilon_2}T_{h_2+1}(cos(\frac{1}{(h_2+1)}arccos(\frac{1}{\sqrt{\epsilon_2}}))\sqrt{1-\frac{1}{n}}).
	\end{equation}
	So the fidelity of the second stage can be calculated as follow.
	\begin{equation}
		F_2=1-\vert c_{h_2+1}(x) \vert ^2=1-\epsilon_2 T_{h_2+1}^2(cos(\frac{1}{h_2+1}arccos(\frac{1}{\sqrt{\epsilon_2}}))\sqrt{1-\frac{1}{n}})
	\end{equation}
	Let
	$
	h_2 \ge ln(\frac{2}{\sqrt{\epsilon_2}})\sqrt{n}-1
	$. Similar to the proof of the theorem \ref{thm1}, we have 
	$
	F_2\ge 1-\epsilon_2
	$. 
\end{proof}

Therefore, let $
\alpha^{'}_{k}=-\beta^{'}_{h_2+1-k}=2arccot(tan(\frac{k\pi}{h_2+1})\sqrt{1-\gamma_2^2})
$, for
$k=2,4,6,...,h_{2},$ where 
$\gamma_2=\frac{1}{cos(\frac{1}{(h_2+1)}arccos(\frac{1}{\sqrt{\epsilon_2}}))}$
, and ensure $
h_2 \ge ln(\frac{2}{\sqrt{\epsilon_2}})\sqrt{n}-1
$, the uniform superposition state of the vertices on the other side of the sender will be transferred to the receiver with the fidelity of at least $1-\epsilon_{2}$.
\subsection{The fidelity of the quantum state transfer algorithm}\label{4.3}
Since the sender and receiver are in different partitions of the complete bipartite graph(shown in Fig. \ref{senderAndReceiver1}),  
the analysis of the algorithm can be simplified in an invariant subspace with the orthogonal
basis $\{\vert e_{1}\rangle, \vert e_{2}\rangle, \vert e_{3}\rangle, \vert e_{4}\rangle, \vert e_{5}\rangle, \vert e_{6}\rangle, \vert e_{7}\rangle, \vert e_{8}\rangle \}$ given below. The orthogonal basis is only used in \ref{4.3}.
\begin{equation}
	\begin{aligned}
		&\vert e_{1}\rangle=\vert sr\rangle
		,\\
		&\vert e_{2}\rangle=\vert rs\rangle
		,\\
		&\vert e_{3}\rangle=\frac{1}{\sqrt{n-1}}\sum\limits_{v}\vert sv\rangle
		,\\
		&\vert e_{4}\rangle=\frac{1}{\sqrt{n-1}}\sum\limits_{v}\vert vs\rangle
		,\\
		&\vert e_{5}\rangle=\frac{1}{\sqrt{m-1}}\sum\limits_{u}\vert ur\rangle,\\
		&\vert e_{6}\rangle=\frac{1}{\sqrt{m-1}}\sum\limits_{u}\vert ru\rangle.\\ 
		&\vert e_{7}\rangle=\frac{1}{\sqrt{(m-1)(n-1)}}\sum\limits_{u,v}\vert uv\rangle,\\
		&\vert e_{8}\rangle=\frac{1}{\sqrt{(m-1)(n-1)}}\sum\limits_{v,u}\vert vu\rangle. 
	\end{aligned}.
\end{equation}
\begin{figure}[H]
	\centering
	\includegraphics[width=0.4\textwidth]{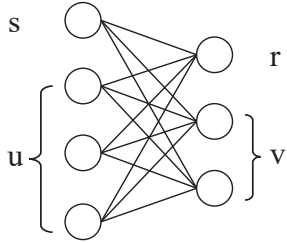}
	\caption{The sender and the receiver are in different partitions. }
	\label{senderAndReceiver1}
\end{figure}
From the analysis of the first stage in \ref{3.1}, we can obtain 
$\vert \psi_{h_{1}}\rangle \sim (0,\sqrt{\frac{m-1}{m}}b_{h_{1}}(x)+\frac{1}{\sqrt{m}}c_{h_{1}}(x),-\frac{1}{\sqrt{m}}b_{h_{1}}(x)+\sqrt{\frac{m-1}{m}}c_{h_{1}}(x),0)^T$.
So in the new basis, the state $\vert \psi_{h_{1}} \rangle$ can be rewritten as 
\begin{equation}
	\vert \psi_{h_{1}}\rangle \sim t_{1}\vert \Psi\rangle+t_{2}\vert e_{2}\rangle+\sqrt{n-1}t_{2}\vert e_{4}\rangle,
\end{equation}
where $t_{1}=c_{h_{1}}(x)-\frac{1}{\sqrt{m-1}}b_{h_{1}}(x)$,  $t_{2}=\sqrt{\frac{m}{n(m-1)}}b_{h_{1}}(x)$ and $\vert \Psi \rangle =\frac{1}{\sqrt{m}}\vert e_{2}\rangle +\frac{1}{\sqrt{m}}\vert e_{4}\rangle +\frac{\sqrt{m-2}}{\sqrt{m}}\vert e_{6}\rangle.$ $\vert \Psi \rangle$ denotes the target state of the first stage. 

So in the second stage, we have 
\begin{equation}
	\vert \psi_{h_{2}}\rangle \sim t_{1}U_{2}\vert \Psi\rangle+t_{2}U_{2}\vert e_{2}\rangle+\sqrt{n-1}t_{2}U_{2}\vert e_{4}\rangle,
\end{equation}
where $U_{2}$ denotes the evolution operators of the second stage. 

Let $t_{1}U_{2}\vert \Psi\rangle=(0,f_{2},0,f_{4},0,f_{6},0,f_{8})^{T}$, $t_{2}U_{2}\vert e_{2}\rangle=(0,g_{2},0,g_{4},0,g_{6},0,g_{8})^{T}$ and $\sqrt{n-1 }t_{2}U_{2}\vert e_{4}\rangle=(0,l_{2},0,l_{4},0,l_{6},0,l_{8})^{T}$. So we can obtain 
\begin{equation}\label{84}
	\begin{cases}
		\vert f_{2} \vert^{2} +\vert f_{4} \vert^{2}+\vert f_{6} \vert^{2}+\vert f_{8} \vert^{2}=\vert t_{1} \vert ^{2},\\
		\vert g_{2} \vert^{2} +\vert g_{4} \vert^{2}+\vert g_{6} \vert^{2}+\vert g_{8} \vert^{2}=\vert t_{2} \vert ^{2},\\
		\vert l_{2} \vert^{2} +\vert l_{4} \vert^{2}+\vert l_{6} \vert^{2}+\vert l_{8} \vert^{2}=(n-1)\vert t_{2} \vert ^{2},\\
		\vert f_{2}+g_{2}+l_{2}\vert^{2}+\vert f_{4}+g_{4}+l_{4}\vert^{2}+\vert f_{6}+g_{6}+l_{6}\vert^{2}+\vert f_{8}+g_{8}+l_{8}\vert^{2}=1.
	\end{cases}
\end{equation}

The target state of the algorithm is $ \frac{1}{\sqrt{m}}\vert e_{2}\rangle+\frac{\sqrt{m-1}}{\sqrt{m}}\vert e_{6}\rangle. $
So the fidelity of the algorithm can be denoted as 
\begin{equation}\label{85}
	F=\bm{\big\vert} \frac{1}{\sqrt{m}}(f_{2} + g_{2}+ l_{2})+
	\frac{\sqrt{m-1}}{\sqrt{m}}(f_{6} + g_{6}+ l_{6})\bm{\big\vert} ^{2}.
\end{equation}
From \ref{4.2}, we know $f_{6}=\sqrt{m-1}f_{2}$. 
So we can obtain 
\begin{equation}
	\begin{aligned}
		F& \ge \vert f_{2}+g_{2}+l_{2}\vert^{2} + \vert f_{6}+g_{6}+l_{6}\vert^{2} -\vert g_{2} \vert^{2} -\vert l_{2} \vert^{2}-\vert g_{6} \vert^{2}-\vert l_{6} \vert^{2} -2\vert g_{2}\vert \vert l_{2}\vert -2\vert g_{6}\vert \vert l_{6}\vert.
	\end{aligned}
\end{equation}
Then by using Eq. (\ref{84}), we have  
\begin{equation}
	\begin{aligned}
		F\ge 1- \vert f_{4}+g_{4}+l_{4}\vert^{2} - \vert f_{8}+g_{8}+l_{8}\vert^{2} 
		-\vert g_{2} \vert^{2} -\vert l_{2} \vert^{2}-\vert g_{6} \vert^{2}-\vert l_{6} \vert^{2} 
		-2\vert g_{2}\vert \vert l_{2}\vert -2\vert g_{6}\vert \vert l_{6}\vert.
	\end{aligned}
\end{equation}
By using $\vert x+y \vert \leq \vert \vert x \vert +\vert y \vert \vert$, we obtain  
\begin{equation}
	\begin{aligned}
		F\ge& 1
		-
		(\vert f_{4}\vert ^{2}
		+\vert f_{8}\vert ^{2})
		-(\vert g_{2} \vert^{2}
		+\vert g_{4}\vert^{2}
		+\vert g_{6} \vert^{2}
		+\vert g_{8}\vert^{2}
		+\vert l_{2} \vert^{2}
		+\vert l_{4}\vert^{2}
		+\vert l_{6} \vert^{2} 
		+\vert l_{8}\vert^{2})
		\\
		&
		-2(\vert f_{4}\vert \vert g_{4}\vert
		+\vert f_{8}\vert \vert g_{8}\vert 
		+\vert f_{4}\vert \vert l_{4}\vert
		+\vert f_{8}\vert \vert l_{8}\vert
		+\vert g_{2}\vert \vert l_{2}\vert 
		+\vert g_{4}\vert \vert l_{4}\vert
		+\vert g_{6}\vert \vert l_{6}\vert
		+\vert g_{8}\vert \vert l_{8}\vert)
		.
	\end{aligned}
\end{equation}

From \ref{4.2}, 
we know $\vert f_{4}\vert ^{2}+\vert f_{8}\vert ^{2}\leq \vert t_{1} \vert^{2} \epsilon_{2} \textless \epsilon_{2}$.
From Eq. (\ref{84}), we have $\vert g_{2} \vert^{2}
+\vert g_{4}\vert^{2}
+\vert g_{6} \vert^{2}
+\vert g_{8}\vert^{2}
+\vert l_{2} \vert^{2}
+\vert l_{4}\vert^{2}
+\vert l_{6} \vert^{2} 
+\vert l_{8}\vert^{2}= n\vert t_{2} \vert^{2}\leq 2\epsilon_{1}$. 

We know $\vert f_{4}\vert 
\vert g_{4}\vert+
\vert f_{8}\vert 
\vert g_{8}\vert 
\leq \sqrt{(\vert f_{4}\vert^2+\vert f_{8}\vert^2)
	(\vert g_{4}\vert^2+\vert g_{8}\vert^2)
}
\textless \sqrt{\epsilon_{1}\epsilon_{2}}$.
We also have $
\vert f_{4}\vert 
\vert l_{4}\vert+
\vert f_{8}\vert 
\vert l_{8}\vert 
\leq 
\sqrt{(\vert f_{4}\vert^2+\vert f_{8}\vert^2)
	(\vert l_{4}\vert^2+\vert l_{8}\vert^2)
}\textless \sqrt{2\epsilon_{1}\epsilon_{2}}$. 
And we have
$\vert g_{2}\vert \vert l_{2}\vert 
+\vert g_{4}\vert \vert l_{4}\vert
+\vert g_{6}\vert \vert l_{6}\vert
+\vert g_{8}\vert \vert l_{8}\vert
\leq
\sqrt{(\vert g_{2} \vert^{2} +\vert g_{4} \vert^{2}+\vert g_{6} \vert^{2}+\vert g_{8} \vert^{2})
	(\vert l_{2} \vert^{2} +\vert l_{4} \vert^{2}+\vert l_{6} \vert^{2}+\vert l_{8} \vert^{2})
}\textless \sqrt{2}\epsilon_{1}
$.

So we obtain 
\begin{equation}\label{88}
	F \textgreater 1-(2+2\sqrt{2})\epsilon_{1}-\epsilon_{2}-(2+2\sqrt{2})\sqrt{\epsilon_{1}\epsilon_{2}}.
\end{equation}
From Eq. (\ref{88}),  we know that the fidelity will be close to 1 when $\epsilon_{1}$ and $\epsilon_{2}$ are small.
For instance, let $\epsilon_{1}$ be $0.01$ and $\epsilon_{2}$ be $0.01$. From Eq. (\ref{88}), we know the fidelity will be greater than $0.89$ regardless of the value of $m$ and $n$.
The simulation results of the algorithm are shown in Fig. \ref{fidelity2}. 
The fidelity is bigger than 0.98 at a certain range of $m$ and $n$ when $\epsilon_{1}=0.01$ and $\epsilon_{2}=0.01$. 
It further verifies that the quantum state transfer algorithm can achieve high fidelity. 
\begin{figure}[H]
	\centering
	\includegraphics[width=0.6\textwidth]{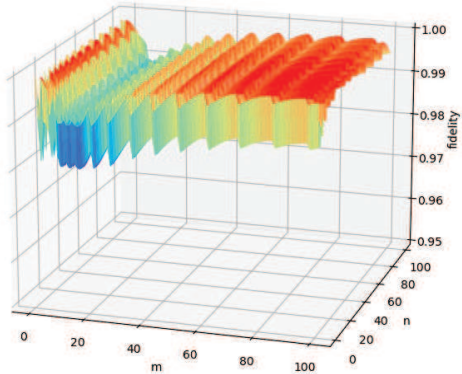}
	\caption{The fidelity of the quantum state transfer algorithm with $\epsilon_{1}=0.01$ and $\epsilon_{2}=0.01$ when the sender and receiver are in different partitions. }
	\label{fidelity2}
\end{figure}
\section{Conclusions}\label{sec5}
In this paper, we propose a high-fidelity quantum state transfer algorithm on the complete bipartite graph. 
The algorithm is divided into two stages. 
The first stage is to transfer the initial state to the uniform superposition state of the vertices on the other side of the sender. 
The second stage is to transfer the uniform superposition state of the vertices on the other side of the sender to the target state.
The two stages are both achieved by using the generalized Grover walks with one marked vertex.
The coin operators of the generalized Grover walks and the query oracles are parametric unitary matrices that changed with time. 

Through analysis, it is found that in the first stage, the initial state is transferred to the uniform superposition state of the vertex on the other side of the sender with the fidelity of at least $1 - \epsilon_{1} $. 
In the second stage, the uniform superposition state of the vertices on the other side of the sender is transferred to the target state with the fidelity of at least $1-\epsilon_{2}$. 
We prove that the fidelity of the algorithm is greater than $ 1-2\epsilon_{1}-\epsilon_{2}-2\sqrt{2}\sqrt{\epsilon_{1}\epsilon_{2}}$ or $1-(2+2\sqrt{2})\epsilon_{1}-\epsilon_{2}-(2+2\sqrt{2})\sqrt{\epsilon_{1}\epsilon_{2}}$  
when the sender and receiver are in the same partition or different partitions.
$\epsilon_{1}$ and $\epsilon_{2}$ are chosen from $(0,1]$. 
When $\epsilon_{1}$ and $\epsilon_{2}$ are small, the fidelity of the algorithm will be close to 1. 

Consequently, the algorithm can achieve high-fidelity quantum state transfer when the sender and receiver are located in the same partition or different partitions of the complete bipartite graph. 
Moreover, the algorithm can achieve high-fidelity quantum state transfer on complete bipartite graphs of various sizes.
Compared to the previous algorithms, the advantage of the algorithm is it works in any case because high-fidelity quantum state transfer can be achieved by adjusting the parameters of the coin operators and the query oracles.
The algorithm provides a novel method for achieving high-fidelity quantum state transfer on the complete bipartite graph, which will offer potential applications for quantum information processing.
\section*{Data Availability Statement}
Data sharing is not applicable to this article as no datasets were generated or analysed during the current study.
\section*{Conflict of interest statement}
The authors do not have any possible conflicts of interest.
\section*{Acknowledgements}
This work is supported by NSFC (Grant Nos. 61901218, 62071015) and the National Key Research and Development Program of China (Grant No.2020YFB1005504).

\bibliography{sn-bibliography}% common bib file

%% BioMed_Central_Bib_Style_v1.01

\begin{thebibliography}{27}
% BibTex style file: bmc-mathphys.bst (version 2.1), 2014-07-24
\ifx \bisbn   \undefined \def \bisbn  #1{ISBN #1}\fi
\ifx \binits  \undefined \def \binits#1{#1}\fi
\ifx \bauthor  \undefined \def \bauthor#1{#1}\fi
\ifx \batitle  \undefined \def \batitle#1{#1}\fi
\ifx \bjtitle  \undefined \def \bjtitle#1{#1}\fi
\ifx \bvolume  \undefined \def \bvolume#1{\textbf{#1}}\fi
\ifx \byear  \undefined \def \byear#1{#1}\fi
\ifx \bissue  \undefined \def \bissue#1{#1}\fi
\ifx \bfpage  \undefined \def \bfpage#1{#1}\fi
\ifx \blpage  \undefined \def \blpage #1{#1}\fi
\ifx \burl  \undefined \def \burl#1{\textsf{#1}}\fi
\ifx \doiurl  \undefined \def \doiurl#1{\url{https://doi.org/#1}}\fi
\ifx \betal  \undefined \def \betal{\textit{et al.}}\fi
\ifx \binstitute  \undefined \def \binstitute#1{#1}\fi
\ifx \binstitutionaled  \undefined \def \binstitutionaled#1{#1}\fi
\ifx \bctitle  \undefined \def \bctitle#1{#1}\fi
\ifx \beditor  \undefined \def \beditor#1{#1}\fi
\ifx \bpublisher  \undefined \def \bpublisher#1{#1}\fi
\ifx \bbtitle  \undefined \def \bbtitle#1{#1}\fi
\ifx \bedition  \undefined \def \bedition#1{#1}\fi
\ifx \bseriesno  \undefined \def \bseriesno#1{#1}\fi
\ifx \blocation  \undefined \def \blocation#1{#1}\fi
\ifx \bsertitle  \undefined \def \bsertitle#1{#1}\fi
\ifx \bsnm \undefined \def \bsnm#1{#1}\fi
\ifx \bsuffix \undefined \def \bsuffix#1{#1}\fi
\ifx \bparticle \undefined \def \bparticle#1{#1}\fi
\ifx \barticle \undefined \def \barticle#1{#1}\fi
\bibcommenthead
\ifx \bconfdate \undefined \def \bconfdate #1{#1}\fi
\ifx \botherref \undefined \def \botherref #1{#1}\fi
\ifx \url \undefined \def \url#1{\textsf{#1}}\fi
\ifx \bchapter \undefined \def \bchapter#1{#1}\fi
\ifx \bbook \undefined \def \bbook#1{#1}\fi
\ifx \bcomment \undefined \def \bcomment#1{#1}\fi
\ifx \oauthor \undefined \def \oauthor#1{#1}\fi
\ifx \citeauthoryear \undefined \def \citeauthoryear#1{#1}\fi
\ifx \endbibitem  \undefined \def \endbibitem {}\fi
\ifx \bconflocation  \undefined \def \bconflocation#1{#1}\fi
\ifx \arxivurl  \undefined \def \arxivurl#1{\textsf{#1}}\fi
\csname PreBibitemsHook\endcsname

%%% 1
\bibitem{kadian2021quantum}
\begin{barticle}
\bauthor{\bsnm{Kadian}, \binits{K.}},
\bauthor{\bsnm{Garhwal}, \binits{S.}},
\bauthor{\bsnm{Kumar}, \binits{A.}}:
\batitle{Quantum walk and its application domains: A systematic review}.
\bjtitle{Computer Science Review}
\bvolume{41},
\bfpage{100419}
(\byear{2021})
\end{barticle}
\endbibitem

%%% 2
\bibitem{venegas2012quantum}
\begin{barticle}
\bauthor{\bsnm{Venegas-Andraca}, \binits{S.E.}}:
\batitle{Quantum walks: a comprehensive review}.
\bjtitle{Quantum Information Processing}
\bvolume{11}(\bissue{5}),
\bfpage{1015}--\blpage{1106}
(\byear{2012})
\end{barticle}
\endbibitem

%%% 3
\bibitem{aharonov1993quantum}
\begin{barticle}
\bauthor{\bsnm{Aharonov}, \binits{Y.}},
\bauthor{\bsnm{Davidovich}, \binits{L.}},
\bauthor{\bsnm{Zagury}, \binits{N.}}:
\batitle{Quantum random walks}.
\bjtitle{Physical Review A}
\bvolume{48}(\bissue{2}),
\bfpage{1687}
(\byear{1993})
\end{barticle}
\endbibitem

%%% 4
\bibitem{childs2009universal}
\begin{barticle}
\bauthor{\bsnm{Childs}, \binits{A.M.}}:
\batitle{Universal computation by quantum walk}.
\bjtitle{Physical review letters}
\bvolume{102}(\bissue{18}),
\bfpage{180501}
(\byear{2009})
\end{barticle}
\endbibitem

%%% 5
\bibitem{lovett2010universal}
\begin{barticle}
\bauthor{\bsnm{Lovett}, \binits{N.B.}},
\bauthor{\bsnm{Cooper}, \binits{S.}},
\bauthor{\bsnm{Everitt}, \binits{M.}},
\bauthor{\bsnm{Trevers}, \binits{M.}},
\bauthor{\bsnm{Kendon}, \binits{V.}}:
\batitle{Universal quantum computation using the discrete-time quantum walk}.
\bjtitle{Physical Review A}
\bvolume{81}(\bissue{4}),
\bfpage{042330}
(\byear{2010})
\end{barticle}
\endbibitem

%%% 6
\bibitem{reitzner2009quantum}
\begin{barticle}
\bauthor{\bsnm{Reitzner}, \binits{D.}},
\bauthor{\bsnm{Hillery}, \binits{M.}},
\bauthor{\bsnm{Feldman}, \binits{E.}},
\bauthor{\bsnm{Bu{\v{z}}ek}, \binits{V.}}:
\batitle{Quantum searches on highly symmetric graphs}.
\bjtitle{Physical Review A}
\bvolume{79}(\bissue{1}),
\bfpage{012323}
(\byear{2009})
\end{barticle}
\endbibitem

%%% 7
\bibitem{rhodes2019quantum}
\begin{barticle}
\bauthor{\bsnm{Rhodes}, \binits{M.L.}},
\bauthor{\bsnm{Wong}, \binits{T.G.}}:
\batitle{Quantum walk search on the complete bipartite graph}.
\bjtitle{Physical Review A}
\bvolume{99}(\bissue{3}),
\bfpage{032301}
(\byear{2019})
\end{barticle}
\endbibitem

%%% 8
\bibitem{yalccinkaya2015qubit}
\begin{barticle}
\bauthor{\bsnm{Yal{\c{c}}{\i}nkaya}, \binits{{\. I}.}},
\bauthor{\bsnm{Gedik}, \binits{Z.}}:
\batitle{Qubit state transfer via discrete-time quantum walks}.
\bjtitle{Journal of Physics A: Mathematical and Theoretical}
\bvolume{48}(\bissue{22}),
\bfpage{225302}
(\byear{2015})
\end{barticle}
\endbibitem

%%% 9
\bibitem{zhan2014perfect}
\begin{barticle}
\bauthor{\bsnm{Zhan}, \binits{X.}},
\bauthor{\bsnm{Qin}, \binits{H.}},
\bauthor{\bsnm{Bian}, \binits{Z.-h.}},
\bauthor{\bsnm{Li}, \binits{J.}},
\bauthor{\bsnm{Xue}, \binits{P.}}:
\batitle{Perfect state transfer and efficient quantum routing: A discrete-time
  quantum-walk approach}.
\bjtitle{Physical Review A}
\bvolume{90}(\bissue{1}),
\bfpage{012331}
(\byear{2014})
\end{barticle}
\endbibitem

%%% 10
\bibitem{li2022controlled}
\begin{botherref}
\oauthor{\bsnm{Li}, \binits{D.}},
\oauthor{\bsnm{Ding}, \binits{P.}},
\oauthor{\bsnm{Zhou}, \binits{Y.}},
\oauthor{\bsnm{Yang}, \binits{Y.}}:
Controlled alternate quantum walk based block hash function.
arXiv preprint arXiv:2205.05983
(2022)
\end{botherref}
\endbibitem

%%% 11
\bibitem{li2013discrete}
\begin{barticle}
\bauthor{\bsnm{Li}, \binits{D.}},
\bauthor{\bsnm{Zhang}, \binits{J.}},
\bauthor{\bsnm{Guo}, \binits{F.-Z.}},
\bauthor{\bsnm{Huang}, \binits{W.}},
\bauthor{\bsnm{Wen}, \binits{Q.-Y.}},
\bauthor{\bsnm{Chen}, \binits{H.}}:
\batitle{Discrete-time interacting quantum walks and quantum hash schemes}.
\bjtitle{Quantum information processing}
\bvolume{12}(\bissue{3}),
\bfpage{1501}--\blpage{1513}
(\byear{2013})
\end{barticle}
\endbibitem

%%% 12
\bibitem{ambainis2007quantum}
\begin{barticle}
\bauthor{\bsnm{Ambainis}, \binits{A.}}:
\batitle{Quantum walk algorithm for element distinctness}.
\bjtitle{SIAM Journal on Computing}
\bvolume{37}(\bissue{1}),
\bfpage{210}--\blpage{239}
(\byear{2007})
\end{barticle}
\endbibitem

%%% 13
\bibitem{magniez2007quantum}
\begin{barticle}
\bauthor{\bsnm{Magniez}, \binits{F.}},
\bauthor{\bsnm{Santha}, \binits{M.}},
\bauthor{\bsnm{Szegedy}, \binits{M.}}:
\batitle{Quantum algorithms for the triangle problem}.
\bjtitle{SIAM Journal on Computing}
\bvolume{37}(\bissue{2}),
\bfpage{413}--\blpage{424}
(\byear{2007})
\end{barticle}
\endbibitem

%%% 14
\bibitem{reitzner2017finding}
\begin{barticle}
\bauthor{\bsnm{Reitzner}, \binits{D.}},
\bauthor{\bsnm{Hillery}, \binits{M.}},
\bauthor{\bsnm{Koch}, \binits{D.}}:
\batitle{Finding paths with quantum walks or quantum walking through a maze}.
\bjtitle{Physical Review A}
\bvolume{96}(\bissue{3}),
\bfpage{032323}
(\byear{2017})
\end{barticle}
\endbibitem

%%% 15
\bibitem{wang2019controlled}
\begin{barticle}
\bauthor{\bsnm{Wang}, \binits{Y.}},
\bauthor{\bsnm{Wu}, \binits{S.}},
\bauthor{\bsnm{Wang}, \binits{W.}}:
\batitle{Controlled quantum search on structured databases}.
\bjtitle{Physical Review Research}
\bvolume{1}(\bissue{3}),
\bfpage{033016}
(\byear{2019})
\end{barticle}
\endbibitem

%%% 16
\bibitem{childs2004spatial}
\begin{barticle}
\bauthor{\bsnm{Childs}, \binits{A.M.}},
\bauthor{\bsnm{Goldstone}, \binits{J.}}:
\batitle{Spatial search by quantum walk}.
\bjtitle{Physical Review A}
\bvolume{70}(\bissue{2}),
\bfpage{022314}
(\byear{2004})
\end{barticle}
\endbibitem

%%% 17
\bibitem{philipp2016continuous}
\begin{barticle}
\bauthor{\bsnm{Philipp}, \binits{P.}},
\bauthor{\bsnm{Tarrataca}, \binits{L.}},
\bauthor{\bsnm{Boettcher}, \binits{S.}}:
\batitle{Continuous-time quantum search on balanced trees}.
\bjtitle{Physical Review A}
\bvolume{93}(\bissue{3}),
\bfpage{032305}
(\byear{2016})
\end{barticle}
\endbibitem

%%% 18
\bibitem{divincenzo2000physical}
\begin{barticle}
\bauthor{\bsnm{DiVincenzo}, \binits{D.P.}}:
\batitle{The physical implementation of quantum computation}.
\bjtitle{Fortschritte der Physik: Progress of Physics}
\bvolume{48}(\bissue{9-11}),
\bfpage{771}--\blpage{783}
(\byear{2000})
\end{barticle}
\endbibitem

%%% 19
\bibitem{shang2019quantum}
\begin{barticle}
\bauthor{\bsnm{Shang}, \binits{Y.}},
\bauthor{\bsnm{Wang}, \binits{Y.}},
\bauthor{\bsnm{Li}, \binits{M.}},
\bauthor{\bsnm{Lu}, \binits{R.}}:
\batitle{Quantum communication protocols by quantum walks with two coins}.
\bjtitle{EPL (Europhysics Letters)}
\bvolume{124}(\bissue{6}),
\bfpage{60009}
(\byear{2019})
\end{barticle}
\endbibitem

%%% 20
\bibitem{chen2019quantum}
\begin{barticle}
\bauthor{\bsnm{Chen}, \binits{X.-B.}},
\bauthor{\bsnm{Wang}, \binits{Y.-L.}},
\bauthor{\bsnm{Xu}, \binits{G.}},
\bauthor{\bsnm{Yang}, \binits{Y.-X.}}:
\batitle{Quantum network communication with a novel discrete-time quantum
  walk}.
\bjtitle{Ieee Access}
\bvolume{7},
\bfpage{13634}--\blpage{13642}
(\byear{2019})
\end{barticle}
\endbibitem

%%% 21
\bibitem{vstefavnak2016perfect}
\begin{barticle}
\bauthor{\bsnm{{\v{S}}tefa{\v{n}}{\'a}k}, \binits{M.}},
\bauthor{\bsnm{Skoup{\`y}}, \binits{S.}}:
\batitle{Perfect state transfer by means of discrete-time quantum walk search
  algorithms on highly symmetric graphs}.
\bjtitle{Physical Review A}
\bvolume{94}(\bissue{2}),
\bfpage{022301}
(\byear{2016})
\end{barticle}
\endbibitem

%%% 22
\bibitem{vstefavnak2017perfect}
\begin{barticle}
\bauthor{\bsnm{{\v{S}}tefa{\v{n}}{\'a}k}, \binits{M.}},
\bauthor{\bsnm{Skoup{\`y}}, \binits{S.}}:
\batitle{Perfect state transfer by means of discrete-time quantum walk on
  complete bipartite graphs}.
\bjtitle{Quantum Information Processing}
\bvolume{16}(\bissue{3}),
\bfpage{1}--\blpage{14}
(\byear{2017})
\end{barticle}
\endbibitem

%%% 23
\bibitem{skoupy2021quantum}
\begin{barticle}
\bauthor{\bsnm{Skoup{\`y}}, \binits{S.}},
\bauthor{\bsnm{{\v{S}}tefa{\v{n}}{\'a}k}, \binits{M.}}:
\batitle{Quantum-walk-based state-transfer algorithms on the complete m-partite
  graph}.
\bjtitle{Physical Review A}
\bvolume{103}(\bissue{4}),
\bfpage{042222}
(\byear{2021})
\end{barticle}
\endbibitem

%%% 24
\bibitem{zhan2019infinite}
\begin{barticle}
\bauthor{\bsnm{Zhan}, \binits{H.}}:
\batitle{An infinite family of circulant graphs with perfect state transfer in
  discrete quantum walks}.
\bjtitle{Quantum Information Processing}
\bvolume{18}(\bissue{12}),
\bfpage{1}--\blpage{26}
(\byear{2019})
\end{barticle}
\endbibitem

%%% 25
\bibitem{santos2022quantum}
\begin{barticle}
\bauthor{\bsnm{Santos}, \binits{R.A.}}:
\batitle{Quantum state transfer on the complete bipartite graph}.
\bjtitle{Journal of Physics A: Mathematical and Theoretical}
\bvolume{55}(\bissue{12}),
\bfpage{125301}
(\byear{2022})
\end{barticle}
\endbibitem

%%% 26
\bibitem{xu2022robust}
\begin{barticle}
\bauthor{\bsnm{Xu}, \binits{Y.}},
\bauthor{\bsnm{Zhang}, \binits{D.}},
\bauthor{\bsnm{Li}, \binits{L.}}:
\batitle{Robust quantum walk search without knowing the number of marked
  vertices}.
\bjtitle{Physical Review A}
\bvolume{106}(\bissue{5}),
\bfpage{052207}
(\byear{2022})
\end{barticle}
\endbibitem

%%% 27
\bibitem{yoder2014fixed}
\begin{barticle}
\bauthor{\bsnm{Yoder}, \binits{T.J.}},
\bauthor{\bsnm{Low}, \binits{G.H.}},
\bauthor{\bsnm{Chuang}, \binits{I.L.}}:
\batitle{Fixed-point quantum search with an optimal number of queries}.
\bjtitle{Physical review letters}
\bvolume{113}(\bissue{21}),
\bfpage{210501}
(\byear{2014})
\end{barticle}
\endbibitem

\end{thebibliography}
%% if required, the content of .bbl file can be included here once bbl is generated
%%\input sn-article.bbl

%% Default %%
%%\input sn-sample-bib.tex%
\end{document}